\long\def\symbolfootnote[#1]#2{\begingroup%
\def\thefootnote{\fnsymbol{footnote}}\footnote[#1]{#2}\endgroup}
\newtheorem{pro}{\rm{\underline{\textbf{Problem}}}}
\newtheorem{prop}{\rm{\underline{\textbf{Proposition}}}}
\newtheorem{thm}{\rm{\underline{\textbf{Theorem}}}}
\newtheorem{lem}{\rm{\underline{\textbf{Lemma}}}}
\begin{document}
\title{Full-Duplex  Wireless-Powered Communication Network with Energy Causality}
\author{~~~~~Xin Kang,~\IEEEmembership{Member,~IEEE}, ~Chin Keong Ho,~\IEEEmembership{Member,~IEEE}, \newline~Sumei
Sun,~\IEEEmembership{Senior Member,~IEEE},
\thanks{X. Kang, C. K. Ho, S. Sun are with Institute for Infocomm Research, 1 Fusionopolis Way,
$\#$21-01 Connexis, South Tower, Singapore 138632 (E-mail: \{xkang,
hock, sunsm\}@i2r.a-star.edu.sg).} } \markboth{X. Kang et. al,
``Full-Duplex Wireless-Powered Communication Network with Energy
Causality''} {} \maketitle

\begin{abstract}
In this paper, we consider a wireless communication network with a
full-duplex hybrid access point (HAP) and a set of wireless users
with energy harvesting capabilities. The HAP implements the
full-duplex through two antennas: one for broadcasting wireless
energy to users in the downlink and one for receiving independent
information from users via time-division-multiple-access (TDMA) in
the uplink at the same time. All users can continuously harvest
wireless power from the HAP until its transmission slot, i.e., the
energy causality constraint is modeled by assuming that energy
harvested in the future cannot be used for tranmission. Hence,
latter users' energy harvesting time is coupled with the
transmission time of previous users. Under this setup, we
investigate the sum-throughput maximization (STM) problem and the
total-time minimization (TTM) problem for the proposed multi-user
full-duplex wireless-powered network. The STM problem is proved to
be a convex optimization problem. The optimal solution strategy is
then obtained in closed-form expression, which can be computed with
linear complexity. It is also shown that the sum throughput is
non-decreasing with increasing of the number of users. For the TTM
problem, by exploiting the properties of the coupling constraints,
we propose a two-step algorithm to obtain an optimal solution. Then,
for each problem, two suboptimal solutions are proposed and
investigated. Finally, the effect of user scheduling on STM and TTM
are investigated through simulations. It is also shown that
different user scheduling strategies should be used for STM and TTM.
\end{abstract}
\begin{keywords}
Energy Harvesting, Wireless Power Transfer, Optimal Time Allocation,
Convex Optimization.
\end{keywords}

\newpage

\section{Introduction}
In conventional wireless networks, such as sensor networks and
cellular networks, wireless devices are powered by replaceable or
rechargeable batteries. The operation time of these battery-powered
devices are usually limited. Though replacing or recharging the
batteries periodically may be a viable option, it may be
inconvenient (for a sensor network with thousands of distributed
sensor nodes), dangerous (for the devices located in toxic
environments), or even impossible (for the medical sensors implanted
inside human bodies) to do so. In such situations, energy harvesting
\cite{ozel2011transmission,ho2012optimal}, with potential to provide
a perpetual power supply, becomes an attractive approach to prolong
these wireless networks' lifetime. Typical sources for energy
harvesting includes solar and wind. Recently ambient radio signal is
receiving much research attention as a new viable source for energy
harvesting, supported by the advantage that the wireless signals can
carry both energy as well as information
\cite{varshney2008transporting}.

Wireless power technologies have evolved significantly to make
wireless power transfer (WPT) for wireless applications a reality
\cite{Shinohara11}. Wireless power can be harvested from the
environment such as the TV broadcast signals \cite{liu2013ambient}.
In \cite{liu2013ambient}, a wireless peer-to-peer communication
system powered solely by ambient radio signals has already been
successfully implemented. Relying on an ambient power source,
however, poses uncertainty in the amount of energy that can be
harvested, and hence there is no guarantee on the minimum data rate.
WPT can also be achieved by using dedicated power transmitters, such
as in passive radio frequency identification (RFID) systems
\cite{finkenzeller2003rfid,smith2013wirelessly}. Conventional RFID
receivers, known as tags, cannot store the energy to be used in
future, hence limiting the potential application space. Recently,
more advanced RFID receivers have been prototyped that are able to
store the harvested energy, allowing the tags to perform sensing or
computation tasks even when the harvested energy is not currently
available \cite{smith2013wirelessly}. In addition, more efficient
wireless energy harvesting via WPT is believed to be in widespread
use in the near future due to the advances in antenna and circuit
designs.


For the above reasons, wireless powered communication networks
(WPCNs), in which wireless devices are powered only by WPT, becomes
a promising research topic \cite{grover2010shannon,
zhang2011mimo,liu2013wireless,zhou2012wireless,huang2012enabling,yang2013dynamic}.
In \cite{grover2010shannon}, the authors studied the tradeoff
between information rate and power transfer in a frequency selective
wireless system. In \cite{zhang2011mimo}, the authors proposed using
multiple antennas to achieve simultaneous wireless information and
power transfer for the emerging self-powered wireless networks. In
\cite{liu2013wireless}, the optimal power splitting between
information decoding and energy harvesting was derived to minimize
the outage probability. Then, in \cite{zhou2012wireless}, practical
receiver design for simultaneous information and power transfer was
studied. The architecture and deployment issues for enabling
wireless power transfer were investigated in
\cite{huang2012enabling}. In \cite{yang2013dynamic}, the authors
investigated how to improve energy beamforming efficiency  by
balancing the resource allocation between channel estimation and
wireless power transfer.

Typical WPCN networks are RFID networks or sensor networks, in which
the devices are usually low-power-consumption sensors with small
form factor \cite{finkenzeller2003rfid}. Such devices typically
store the harvested power in supercapacitors which have the
advantages of small form factor, fast charging cycle, and can
sustain many years of charging and discharging cycles
\cite{simjee2006everlast}, as compared to using rechargeable
batteries. However, supercapacitors suffer from high self-discharge
\cite{kaus2010modelling} and may not be able to store the harvested
energy long enough to be used for the next communication cycle,
which  may be after a few days or weeks depending on the
applications.

%
In this paper, to exploit the broadcast nature of the wireless
medium, we consider the use of WPT to support  multiple users
concurrently, using supercapacitor for storing the harvested energy.
To this end, we employ a hybrid access point (HAP) with dual
functions: to perform WPT to all users, and concurrently act as an
access point to collect the users' data. To support multiple users,
we employ time-division-multiple-access (TDMA) for the uplink
communications, but we allow concurrent downlink WPT to all users
even during the uplink communications. Each user continues to
harvest energy from the HAP until it performs uplink information
transmission. To account for the high self-discharge characteristic
of supercapacitor and potential long delay between any two
communication cycles, we assume the users cannot use the harvested
energy after its transmission slot, i.e., each user can only use all
the energy harvested so far before its transmission. Consequently,
latter users' can harvest more energy. This is consistent with the
concept of energy causality considered in
\cite{ho2012optimal,ozel2011transmission}, but the constraint is
imposed here due to the use of supercapacitor. The HAP is equipped
with two antennas to enable the concurrent downlink-WPT and
uplink-communication operations, respectively. We assume perfect
isolation between the two antennas, or the known WPT signal is
removed via analog and digital self-interference cancellation
\cite{bharadia2013full}, such that there is no WPT interference to
the uplink signals. In practice,  We refer to such a proposed WPCN
with a dual-function HAP as a full-duplex WPCN. We note that the
full-duplex concept here differs from conventional full-duplex
communications systems where the full-duplex is for uplink and
downlink information transmission only.

A closely related system model is considered in
\cite{ju2014throughput}. However, in \cite{ju2014throughput}, the
HAP is limited to having one single antenna. Thus, HAP cannot
perform WPT and uplink communication concurrently. Consequently, the
optimization for the time allocation problem studied in
\cite{ju2014throughput} is simpler, and the problem studied in this
paper is more interesting and challenging due to the coupling
between users's energy harvesting and transmission time. It is also
worth pointing out there are some existing literature on the
simultaneous wireless information and power transfer
\cite{fouladgar2012transfer,ng2013energy,nasir2013relaying,
Huang2013}. However, these works mainly focused on realizing
single-direction (i.e., either uplink or downlink) simultaneous WPT
and information transmission. This is different from our work where
bi-directional simultaneous WPT and information transmission is
enabled. 

The  contribution and main results of this paper are listed as
follows:
\begin{itemize}
\item We propose a new model to enable simultaneous downlink WPT and uplink information transmission for a multi-user wireless network by employing a full-duplex HAP.
We characterize two fundamental optimization problems for the
proposed full-duplex WPCN: (i). Sum-throughput maximization (STM),
i.e., maximize the total throughput of the proposed WPCN subject to
a total time constant $T$. (ii). Total-time minimization (TTM),
i.e., minimize the total charging and transmission time of the
proposed WPCN subject to the constraints that each user has certain
amount of data to send back to the HAP.
  \item For the STM problem, we rigorously prove that the formulated problem
is a convex optimization problem. By using convex optimization
techniques, the optimal time allocation solution is obtained in
closed-form. For convenience of computation, we propose a simple
algorithm with linear complexity to calculate the optimal time
allocation. We show that the sum throughput of the network is
non-decreasing with the increasing of the number of users despite
having the same total time constraint. We also show by simulations
that the users with low SNR should be scheduled to transmit first.
  \item For the TTM problem, we show that the optimal solution is in general
not unique. By exploiting the properties of the coupling
constraints, we propose a two-step algorithm to obtain an optimal
time allocation of the formulated problem. We also show by
simulations that the users with high SNR should be scheduled to
transmit first.

\item For both problems (STM and TTM), two suboptimal time allocation schemes are given. It is shown that optimization improve the system performance significantly.
\end{itemize}

The rest of this paper is organized as follows. In Section
\ref{Sec-SysModel}, we describe the WPCN system model and the
proposed full-duplex protocol. In Section \ref{Sec-STM}, we present
the problem formulation of the STM problem, and derive its the
optimal solution. In Section \ref{Sec-TTM}, we present the problem
formulation of the TTM problem, and propose a two-step algorithm to
obtain an optimal solution. In Section \ref{Sec-STA}, we propose two
suboptimal solutions for each problem (STM and TTM). In Section
\ref{Sec-NumericalResults}, numerical results are given to study the
performance of the proposed time allocation schemes. Section
\ref{Sec-Conclusions} concludes the paper.

%
%


\section{System Model}\label{Sec-SysModel}
\begin{figure}[t]
        \centering
        \includegraphics*[width=12cm]{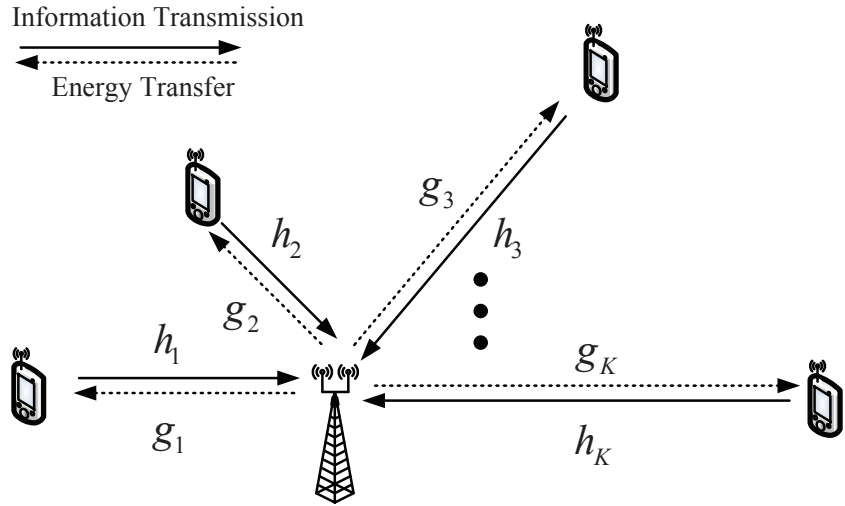}
        \caption{System Model: A Wireless Powered Communications Network}
        \label{Fig-SystemModel}
\end{figure}
In this paper, we consider a WPCN with one HAP and $K$ users. The
HAP is assumed to be equipped with two antennas. One is for the
downlink wireless energy transfer. The other one is used to receive
the uplink information transmission from the users. All the user
terminals are assumed to be equipped with one single antenna each.
As shown in Fig. \ref{Fig-SystemModel}, the channel power gain of
the downlink channel from the HAP to user $i$ is denoted by $g_i$.
The channel power gain of the uplink channel from user $i$ to the
HAP is denoted by $h_i$. For the convenience of exposition, all the
channels involved are assumed to be block-fading \cite{kangTWC},
i.e., the channels remain constant during each transmission block,
but possibly change from one block to another. It is also assumed
that all these channel power gains are perfectly known at the HAP.

The frame structure for energy harvesting and information
transmission over one transmission block is shown in Fig.
\ref{Fig-EHModel}. In each block, the HAP keeps broadcasting
wireless energy to all the users with a constant transmit power
using one of its antenna. The transmit power of the HAP is denoted
by $P_H$. 
To ensure tens of years of WPCN operations and small form factor for
the users, the harvested energy is stored in supercapacitors. Since
supercapacitors suffer from high
self-discharge\cite{kaus2010modelling}, we assume that the users can
harvest energy before its transmission but not after.
Hence, latter users can harvest more energy. We assume the users
have no other energy source nor battery to store its harvested
energy, and hence all harvested energy must be used for transmission
with the frame. A TDMA structure is employed by the HAP to receive
the uplink information transmission from the users. For convenience,
we assume user $i$ transmits during the time slot $i$. The uplink
transmission time for user $i$ is denoted by $\tau_i$, $\forall i
\in \{1,\cdots,K\}$.

\begin{figure}[t]
        \centering
        \includegraphics*[width=12cm]{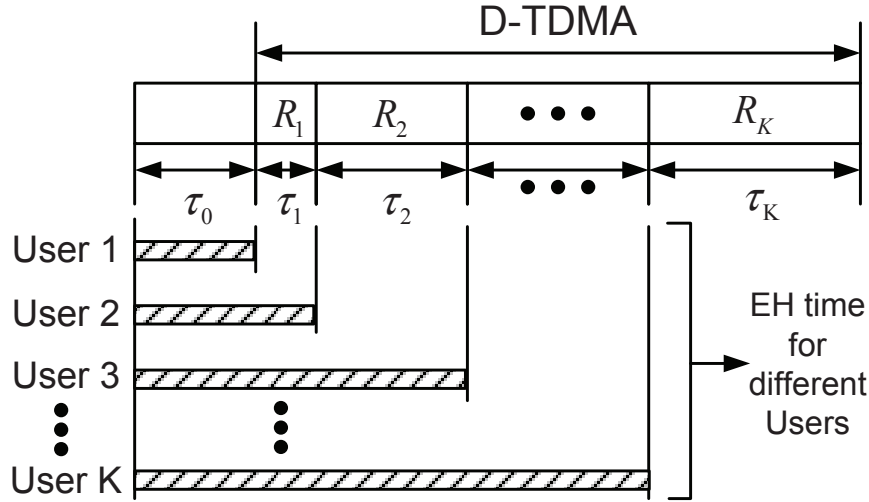}
        \caption{Frame structure for energy harvesting and information transmission. Latter users can harvest more energy. }
        \label{Fig-EHModel}
\end{figure}

Then, the energy harvesting time of user $i$ is given by
$\sum_{j=0}^{i-1} \tau_j, ~\forall i \in\{1,2,\cdots,K\}$. Then, the
total energy harvested by user $i$ from the HAP, denoted as $E_i$,
can be obtained as
\begin{align}
E_i=\eta_i\sum_{j=0}^{i-1} g_{i}P_{H} \tau_j, ~\forall i
\in\{1,2,\cdots,K\},
\end{align}
where $\eta_i \in(0,1)$ is a constant denoting the energy harvesting
efficiency for user $i$.

The average transmit power $p_i$ for user $i$ during its
transmission slot is given by
\begin{align}
p_i=\frac{E_i}{\tau_i},~\forall i \in\{1,2,\cdots,K\}.
\end{align}

Since TDMA is employed, each user can only transmit during its
allocated time slot, and thus there is no mutual interference among
the users.  Besides, we assume that the HAP is equipped with a
successive interference cancellation decoder. Thus, the interference
from the downlink energy signal can be decoded first, and then be
subtracted from the received signal. Then, the instantaneous uplink
transmission rate for user $i$ can be written as
\begin{align}
R_i&=\ln\left(1+\frac{h_{i}E_i}{\tau_i\sigma^2}\right),\nonumber\\
&=\ln\left(1+\frac{\gamma_i\sum_{j=0}^{i-1}
 \tau_j}{\tau_i}\right), ~\forall i
\in\{1,2,\cdots,K\},
\end{align}
where $\gamma_i$ is defined as
$\gamma_i=\frac{h_{i}\eta_ig_{i}P_{H}}{\sigma^2}$, $\forall i
\in\{1,2,\cdots,K\}$, and $\sigma^2$ is the noise power at the HAP.

In this paper, we are interested in the following two problems:
%
(i). Sum-throughput maximization, i.e., maximize the total
throughput of the proposed WPCN subject to a total time constant
$T$. (ii). Total-time minimization, i.e., minimize the total
charging and transmission time of the proposed WPCN subject to the
constraints that each user has certain amount of data to send back
to the HAP. These two problems are investigated in the following two
sections, respectively.


\section{Sum-throughput Maximization}\label{Sec-STM}

\subsection{Problem Formulation}\label{Sec-ProbFormulation}
Define $\boldsymbol{\tau}=[\tau_0,~ \cdots,~ \tau_K]^T$, the total
throughput denoted by $\mathcal {T}(\boldsymbol{\tau})$ of the
system is
\begin{align}
\mathcal {T}(\boldsymbol{\tau})&=\sum_{i=1}^K \tau_i R_i,\nonumber\\
&=\sum_{i=1}^K \tau_i \ln\left(1+\frac{\gamma_i\sum_{j=0}^{i-1}
 \tau_j}{\tau_i}\right),
\end{align}
where $\gamma_i$ is defined as
$\gamma_i=\frac{h_{i}\eta_ig_{i}P_{H}}{\sigma^2}$, $\forall i
\in\{1,2,\cdots,K\}$.

In this section, we are interested in finding the optimal time
allocation strategy to maximize the total throughput $\mathcal
{T}(\boldsymbol{\tau})$ of the described WPCN subject to a time
constant $T$. For convenience, we use a normalized unit block time,
i.e., $T=1$.  Thus, the throughput maximization problem can be
formulated as

\begin{pro}\label{Problem-ThroughputMaximization}
\begin{align}
\max_{\boldsymbol{\tau}} &\sum_{i=1}^K \tau_i
\ln\left(1+\frac{\gamma_i\sum_{j=0}^{i-1}
 \tau_j}{\tau_i}\right), \label{Eq-ThroughputMaximization-Obj}\\
\mbox{s.t.}~&~\tau_i \ge 0, ~\forall i \in\{0,1,2,\cdots,K\},\label{Eq-ThroughputMaximization-conng}\\
&\sum_{i=0}^K\tau_i \le 1.\label{Eq-ThroughputMaximization-con}
\end{align}
\end{pro}

Problem \ref{Problem-ThroughputMaximization} is a convex
optimization problem. To show this, we first present the following
lemma.

\begin{lem}\label{Lemma-concave}
The throughput function of user $i$ given by
$\mathcal{T}_i(\boldsymbol{\tau})\triangleq\tau_i
\ln\left(1+\frac{\gamma_i\sum_{j=0}^{i-1}
 \tau_j}{\tau_i}\right)$, $\forall i \in\left\{1, \cdots, K\right\}$, is a concave function of
 $\boldsymbol{\tau}=[\tau_0,~ \cdots,~ \tau_K]^T\succcurlyeq  \boldsymbol{0}$.
\end{lem}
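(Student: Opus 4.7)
The plan is to recognize $\mathcal{T}_i$ as the composition of a \emph{perspective function} with an affine map, which is a standard route to concavity in convex analysis.

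First, I would recall that for any concave function $f:\mathbb{R}\to\mathbb{R}$, its perspective $g(x,t)\triangleq t\,f(x/t)$ is jointly concave on the half-space $\{(x,t):t>0\}$ (see, e.g., Boyd \& Vandenberghe, Sec.~3.2.6). Applying this to the concave function $f(x)=\ln(1+x)$ yields that $g(x,t)=t\ln\!\bigl(1+x/t\bigr)$ is jointly concave in $(x,t)$ for $t>0$.

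Next, I would introduce the affine map $A_i:\mathbb{R}^{K+1}\to\mathbb{R}^2$ defined by
\begin{equation*}
A_i(\boldsymbol{\tau})\;=\;\bigl(\,\gamma_i\textstyle\sum_{j=0}^{i-1}\tau_j,\ \tau_i\bigr).
\end{equation*}
Then $\mathcal{T}_i(\boldsymbol{\tau}) = g\bigl(A_i(\boldsymbol{\tau})\bigr)$, and since concavity is preserved under composition with an affine map, $\mathcal{T}_i$ is concave on the open set $\{\boldsymbol{\tau}\succcurlyeq\boldsymbol{0}:\tau_i>0\}$.

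The only subtle point, and the place where a little care is required, is the boundary $\tau_i=0$. I would handle this by noting that $\lim_{\tau_i\downarrow 0}\tau_i\ln(1+\gamma_i s/\tau_i)=0$ whenever $s\triangleq\sum_{j=0}^{i-1}\tau_j\ge 0$ is fixed, so $\mathcal{T}_i$ admits a continuous extension to the closed orthant $\boldsymbol{\tau}\succcurlyeq\boldsymbol{0}$ with value $0$ on the face $\tau_i=0$. Since the closure of a concave function on an open convex set (continuously extended) remains concave on the closure, this finishes the argument. If desired, one can alternatively give a self-contained derivation by computing the Hessian on $\{\tau_i>0\}$ and showing it is negative semidefinite, but the perspective argument is cleaner and I expect the boundary extension to be the only obstacle worth mentioning.
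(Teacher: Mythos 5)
Your proof is correct, but it takes a genuinely different route from the paper. The paper proceeds by brute force: it writes out every entry of the Hessian $\boldsymbol{H}_i$ of $\mathcal{T}_i$, then shows negative semidefiniteness directly by collapsing the quadratic form into the single term
\begin{equation*}
\boldsymbol{v}^T\boldsymbol{H}_i\boldsymbol{v}=-\frac{\gamma_i^2}{\tau_i\left(\gamma_i\sum_{j=0}^{i-1}\tau_j+\tau_i\right)^2}\left(\tau_i\sum_{j=0}^{i-1}v_j-v_i\sum_{j=0}^{i-1}\tau_j\right)^2\le 0.
\end{equation*}
You instead identify $\mathcal{T}_i$ as the perspective of $\ln(1+x)$ composed with an affine map, which gives concavity on $\{\tau_i>0\}$ with essentially no computation, and then extend to the boundary $\tau_i=0$ by continuity. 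Your approach buys brevity and conceptual clarity, and --- notably --- it is the only one of the two that actually addresses the face $\tau_i=0$ of the claimed domain $\boldsymbol{\tau}\succcurlyeq\boldsymbol{0}$: the paper's Hessian has $\tau_i$ in denominators, so its argument is silently restricted to $\tau_i>0$ and leaves the same boundary gap you flag, without closing it. What the paper's computation buys in exchange is an explicit factored quadratic form, from which one can read off the direction of degeneracy (equality holds exactly when $v$ is proportional to $\tau$ on the relevant coordinates), a fact the perspective argument does not surface. Both proofs are valid for the interior; yours is the more complete one as stated.
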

\begin{proof}
According to \cite{Convexoptimization}, a function is concave if its
Hessian is negative semidefinite.  Thus, to show
$\mathcal{T}_i(\boldsymbol{\tau})$ is a concave function of
 $\boldsymbol{\tau}$, we have show that its Hessian is negative semidefinite. Denote the Hessian of
$\mathcal{T}_i(\boldsymbol{\tau})$ by $\boldsymbol{H}_i$ and denote
its element by $d_{m,n}^{(i)}$ at
the $m$th row and $n$th column. The diagonal entries of
$\boldsymbol{H}_i$, i.e., $m=n$, can be obtained as
\begin{align}
d_{m,m}^{(i)}=\left\{\begin{array}{cc}
                       -\frac{\gamma_i^2 \tau_i}{\left(\gamma_i\sum_{j=0}^{i-1}
 \tau_j+\tau_i\right)^2}, &~m<i, \\
                       -\frac{\gamma_i^2 (\sum_{j=0}^{i-1}
 \tau_j)^2}{\tau_i\left(\gamma_i\sum_{j=0}^{i-1}
 \tau_j+\tau_i\right)^2}, &~m=i,  \\
                       0, &~m>i.
                     \end{array}
\right.
\end{align}
The off-diagonal entries of $\boldsymbol{H}_i$ can be obtained as
\begin{align}
d_{m,n}^{(i)}=\left\{\begin{array}{cc}
                       -\frac{\gamma_i^2 \tau_i}{\left(\gamma_i\sum_{j=0}^{i-1}
 \tau_j+\tau_i\right)^2}, &~m<i~\mbox{and}~n<i, \\
                       0, &~m>i~\mbox{or}~n>i,\\
-\frac{\gamma_i^2 \sum_{j=0}^{i-1}
 \tau_j}{\left(\gamma_i\sum_{j=0}^{i-1}
 \tau_j+\tau_i\right)^2}, &~\mbox{otherwise}.
                     \end{array}
\right.
\end{align}
For any given real vector $\boldsymbol{v}=[v_0,~ \cdots,~ v_K]^T$,
it follows that
\begin{align}
\boldsymbol{v}^T\boldsymbol{H}_i\boldsymbol{v}&=-\frac{\gamma_i^2
}{\tau_i\left(\gamma_i\sum_{j=0}^{i-1}
 \tau_j+\tau_i\right)^2}\left(\tau_i\sum_{j=0}^{i-1}v_j-v_i\sum_{j=0}^{i-1}
 \tau_j\right)^2 \nonumber\\&{\le}0,
\end{align}
where the inequality follows from the fact that $\tau_i \ge 0$.
Thus, $\boldsymbol{H}_i$ is negative semidefinite. Lemma
\ref{Lemma-concave} is then proved.
\end{proof}

\begin{prop}
Problem \ref{Problem-ThroughputMaximization} is a convex
optimization problem.
\end{prop}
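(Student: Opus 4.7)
The plan is to leverage Lemma \ref{Lemma-concave} directly: since that lemma already certifies concavity of every per-user throughput $\mathcal{T}_i(\boldsymbol{\tau})$ on the nonnegative orthant, the only remaining work is to glue these facts together with the standard calculus of convex functions and to verify that the feasible set is convex.

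First, I would observe that the objective in \eqref{Eq-ThroughputMaximization-Obj} is exactly $\sum_{i=1}^K \mathcal{T}_i(\boldsymbol{\tau})$. By Lemma \ref{Lemma-concave}, each summand is concave in $\boldsymbol{\tau}$ on $\{\boldsymbol{\tau}\succcurlyeq\boldsymbol{0}\}$, and nonnegative sums of concave functions are concave; hence the objective is concave. Equivalently, $-\sum_{i=1}^K \mathcal{T}_i(\boldsymbol{\tau})$ is convex, which is the form needed for a standard convex program.

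Second, I would verify convexity of the feasible region defined by \eqref{Eq-ThroughputMaximization-conng}--\eqref{Eq-ThroughputMaximization-con}. Each constraint $\tau_i \ge 0$ is a linear (hence affine) inequality, and $\sum_{i=0}^K \tau_i \le 1$ is also affine. Therefore the feasible set is an intersection of half-spaces, i.e.\ a polytope, which is convex. Combining the two observations, Problem \ref{Problem-ThroughputMaximization} maximizes a concave function over a convex set, and so fits the canonical definition of a convex optimization problem in \cite{Convexoptimization}.

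I do not anticipate a significant obstacle here, since all the analytic difficulty has been absorbed into Lemma \ref{Lemma-concave}. The only mild subtlety is that $\mathcal{T}_i(\boldsymbol{\tau})$ is defined through $\tau_i$ in a denominator and so is not literally defined at $\tau_i=0$; I would handle this by taking the standard continuous extension $\lim_{\tau_i\downarrow 0}\tau_i\ln(1+\gamma_i\sum_{j<i}\tau_j/\tau_i)=0$, which preserves concavity on the closed nonnegative orthant and is also the physically meaningful value (no transmission time yields zero throughput). With this convention, the statement follows immediately.
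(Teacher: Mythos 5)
Your proposal is correct and follows essentially the same route as the paper: invoke Lemma \ref{Lemma-concave} for concavity of each $\mathcal{T}_i(\boldsymbol{\tau})$, note that sums of concave functions are concave, and observe that all constraints are affine. Your extra remark on the continuous extension at $\tau_i=0$ is a small added point of rigor the paper omits, but it does not change the argument.
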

\begin{proof}
According to \cite{Convexoptimization}, a nonnegative weighted
summation of concave functions is concave.  Then, it follows from
Lemma \ref{Lemma-concave} that the objective function of Problem
\ref{Problem-ThroughputMaximization} given by
\eqref{Problem-ThroughputMaximization} is a concave function of
$\boldsymbol{\tau}$ since it is a summation of
$\mathcal{T}_i(\boldsymbol{\tau})$'s. Besides, all the constraints
of Problem \ref{Problem-ThroughputMaximization} are affine. Thus, it
is clear that Problem \ref{Problem-ThroughputMaximization} is a
convex optimization problem.
\end{proof}

Another important feature of Problem
\ref{Problem-ThroughputMaximization} is presented in the following
proposition.

\begin{prop}\label{prop-equal1}
The optimal time allocation $\boldsymbol{\tau}^*=[\tau_0^*,~\cdots,
~\tau_K^*]^T$ of Problem \ref{Problem-ThroughputMaximization} must
satisfy $\sum_{i=0}^K\tau_i^* =1$.
\end{prop}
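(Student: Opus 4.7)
The plan is to proceed by contradiction. Suppose an optimal time allocation $\boldsymbol{\tau}^*$ satisfies $\sum_{i=0}^{K}\tau_i^* = 1-\delta$ with $\delta>0$. I will then construct a feasible allocation $\boldsymbol{\tau}'$ with a strictly larger objective value, contradicting optimality.

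The natural construction is to pour the unused time $\delta$ into the initial harvesting slot: set $\tau_0' = \tau_0^* + \delta$ and $\tau_i' = \tau_i^*$ for all $i\ge 1$. Feasibility is immediate, since each $\tau_i' \ge 0$ and $\sum_{i=0}^K \tau_i' = 1$, so constraints \eqref{Eq-ThroughputMaximization-conng} and \eqref{Eq-ThroughputMaximization-con} both hold. The point of choosing $\tau_0$ (rather than any $\tau_i$ with $i\ge 1$) is that $\tau_0$ enters every per-user throughput term $\mathcal{T}_i(\boldsymbol{\tau})$ only through the sum $\sum_{j=0}^{i-1}\tau_j$ inside the logarithm, so increasing $\tau_0$ cannot decrease any term while leaving the $\tau_i \ln(\cdot)$ prefactors untouched.

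The key calculation is then to show that the objective strictly increases. For each $i\in\{1,\dots,K\}$ with $\tau_i^*>0$, the map
\begin{equation*}
x \mapsto \tau_i^* \ln\!\left(1 + \frac{\gamma_i\bigl(x + \sum_{j=1}^{i-1}\tau_j^*\bigr)}{\tau_i^*}\right)
\end{equation*}
is strictly increasing in $x\ge 0$, so replacing $\tau_0^*$ by $\tau_0^*+\delta$ strictly raises that term; for any $i$ with $\tau_i^*=0$, the term is zero in both allocations. Hence $\mathcal{T}(\boldsymbol{\tau}') > \mathcal{T}(\boldsymbol{\tau}^*)$, provided at least one index $i\ge 1$ has $\tau_i^*>0$.

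The only loose end, and the one mildly delicate point of the argument, is to rule out the degenerate case where $\tau_i^*=0$ for every $i\ge 1$. I will dispose of this by exhibiting a feasible point with strictly positive objective, e.g. $\tau_0=\tau_1=\tfrac12$ and $\tau_i=0$ for $i\ge 2$, which yields $\mathcal{T} = \tfrac12\ln(1+\gamma_1)>0$; since the all-zero uplink allocation gives $\mathcal{T}=0$, it cannot be optimal, so the strict-increase step above applies and the contradiction is complete.
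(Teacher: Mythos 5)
Your proof is correct and follows essentially the same route as the paper's: argue by contradiction and pour the unused time into $\tau_0$, using the strict monotonicity of the objective in $\tau_0$. The only difference is that you explicitly dispose of the degenerate case $\tau_i^*=0$ for all $i\ge 1$ (where increasing $\tau_0$ alone does not strictly raise the objective), a corner case the paper's proof silently skips; your handling of it is a welcome extra bit of rigor but not a different argument.
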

\begin{proof}
This can be proved by contradiction. Suppose
$\boldsymbol{\tau}^\prime=[\tau_0^\prime,~\cdots, ~\tau_K^\prime]^T$
is an optimal solution of Problem
\ref{Problem-ThroughputMaximization}, and it satisfies that
$\sum_{i=0}^K\tau_i^\prime <1$. It follows that
$\tau_0^\prime<1-\sum_{i=1}^K\tau_i^\prime$. It is easy to verify
that the objective function given in
\eqref{Eq-ThroughputMaximization-Obj} is a monotonic increasing
function with respect to $\tau_0$. Thus, the value of
\eqref{Eq-ThroughputMaximization-Obj} under the vector
$[1-\sum_{i=1}^K\tau_i^\prime,~ \tau_1^\prime, ~\cdots,
~\tau_K^\prime]^T $ is larger than that under
$\boldsymbol{\tau}^\prime$. This contradicts with our presumption.
Thus, the optimal $\boldsymbol{\tau}^*$ must satisfy
$\sum_{i=0}^K\tau_i^* =1$.
\end{proof}


\subsection{Optimal Solution}
In this subsection, we derive the optimal solution of Problem
\ref{Problem-ThroughputMaximization} using convex optimization
techniques.

The Lagrangian of Problem \ref{Problem-ThroughputMaximization} is
\begin{align}
\mathcal {L}\left(\boldsymbol{\tau},\lambda\right)=\sum_{i=1}^K
\tau_i \ln\left(1+\frac{\gamma_i\sum_{j=0}^{i-1}
 \tau_j}{\tau_i}\right)-\lambda\left(\sum_{i=0}^K\tau_i-1\right),
\end{align}
where $\lambda$ is the non-negative Lagrangian dual variable
associated with the constraint given in
\eqref{Eq-ThroughputMaximization-con}.

Then, the dual function of Problem
\ref{Problem-ThroughputMaximization} can be written as
\begin{align}
\mathcal {G}(\lambda)=\min_{\boldsymbol{\tau}\in\mathcal {S}}
\mathcal {L}\left(\boldsymbol{\tau},\lambda\right),
\end{align}
where $\mathcal {S}$ is the feasible set of $\boldsymbol{\tau}$
specified by the constraints \eqref{Eq-ThroughputMaximization-conng}
and \eqref{Eq-ThroughputMaximization-con}. It is observed that there
exists an $\boldsymbol{\tau}\in\mathcal {S}$ with all strict
positive element (i.e., $\tau_i>0, \forall i \in \{0,1,\cdots,K\} $)
satisfying $\sum_{i=0}^K\tau_i<1$. Thus, according to the Slater's
condition \cite{Convexoptimization}, strong duality holds for this
problem. Thus, Problem \ref{Problem-ThroughputMaximization} can be
solved by solving its Karush-Kuhn-Tucker (KKT) conditions, which are
given by
\begin{align}
\sum_{i=0}^K\tau^*_i &\le 1, \label{eq-KKT1}\\
\lambda^*\left(\sum_{i=0}^K\tau^*_i -1\right)&=0,\label{eq-KKT2}\\
\frac{\partial \mathcal
{L}\left(\boldsymbol{\tau},\lambda^*\right)}{\partial
\tau_i}\Big|_{\tau_i=\tau_i^*}&=0,~ \forall i \in
\{0,1,\cdots,K\}\label{eq-KKT3},
\end{align}
where $\tau_i^*, \forall i$ and $\lambda^*$ denote the optimal
primal and dual solutions of Problem
\ref{Problem-ThroughputMaximization}. 
Then, from \eqref{eq-KKT3}, it follows that
\begin{align}
\sum_{k=1}^K \frac{\gamma_k}{\gamma_k
\frac{\sum_{j=0}^{k-1}\tau_j^*}{\tau_k^*}+1}=\lambda^*&,\label{eq-KKT3a}\\
\mathcal {B}_i
\left(\frac{\sum_{j=0}^{i-1}\tau_j^*}{\tau_i^*}\right)+\sum_{k=i+1}^{K}
\frac{\gamma_k}{\gamma_k
\frac{\sum_{j=0}^{k-1}\tau_j^*}{\tau_k^*}+1}=\lambda^*&,
\nonumber\\\forall i
\in \{1,\cdots,K-1\}&.\label{eq-KKT3b}\\
\mathcal {B}_K
\left(\frac{\sum_{j=0}^{i-1}\tau_j^*}{\tau_i^*}\right)=\lambda^*&,\label{eq-KKT3c}
\end{align}
where $\mathcal {B}_i (x)$ is defined as $ \mathcal {B}_i
(x)\triangleq \ln(1+\gamma_ix)-\frac{\gamma_ix}{1+\gamma_ix}$,
$\forall i \in \{1,\cdots,K\}$.

It is observed that the right hand sides of equations
\eqref{eq-KKT3a}-\eqref{eq-KKT3c} are the same. Thus, substituting
\eqref{eq-KKT3a} into \eqref{eq-KKT3b} and \eqref{eq-KKT3c}, we have
\allowdisplaybreaks

\begin{align}
 \mathcal {B}_1
\left(\frac{\tau_0^*}{\tau_1^*}\right)-
\frac{\gamma_1}{\gamma_1 \frac{\tau_0^*}{\tau_1^*}+1}&=0,\label{eq-B1}\\
&~\vdots\\
\mathcal {B}_K \left(\frac{\sum_{j=0}^{K-1}
\tau_j^*}{\tau_K^*}\right)-\kern-0.5mm \frac{\gamma_K}{\gamma_K
\frac{\sum_{j=0}^{K-1} \tau_j^*}{\tau_K^*}+1}&=\sum_{k=1}^{K-1}
\frac{\gamma_k}{\gamma_k
\frac{\sum_{j=0}^{k-1}\tau_j^*}{\tau_k^*}+1}.\label{eq-BK}
\end{align}

Now, we denote $\frac{\sum_{j=0}^{i-1} \tau_j^*}{\tau_i^*}$ by
$x_i$, i.e.,
\begin{align}
x_i\triangleq \frac{\sum_{j=0}^{i-1} \tau_j^*}{\tau_i^*}, ~\forall i
\in \{1,\cdots,K\}.
\end{align}
We denote the right hand side of \eqref{eq-B1}-\eqref{eq-BK} by
$c_i$, i.e.,
\begin{align}
c_1&\triangleq 0,\\
c_i&\triangleq \sum_{k=1}^{i-1} \frac{\gamma_k}{\gamma_k x_k+1},
\forall i \in \{2,\cdots,K\}. \label{eq-ci}
\end{align}
For convenience, we introduce the following function
\begin{align}
\mathcal {F}_i (x_i)\triangleq \mathcal {B}_i
(x_i)-\frac{\gamma_i}{\gamma_ix_i+1},~\forall i \in \{1,\cdots,K\}.
\end{align}

Let $c_i\ge 0, \forall i \in \{1,\cdots,K\}$ be a series of
constants, the solution of $\mathcal {F}_i (x)=c_i$  denoted by
$x_i$ can be obtained as
\begin{align}\label{eq-xi}
x_i=\frac{1}{\gamma_i}\left(e^{\mathcal
{W}\left(\frac{\gamma_i-1}{e^{c_i+1}}\right)+c_i+1}-1\right),
\forall i \in \{1,\cdots,K\}.
\end{align}
where $\mathcal {W}\left(\cdot\right)$ is the Lambert W-Function
\cite{LambertW-function}.


It is observed from \eqref{eq-xi} that we need $c_i$ to compute
$x_i$. When $i=1$, $x_1$ can be easily calculated since $c_1=0$. To
compute $x_2$, we need the value of $c_2$. It is observed from
\eqref{eq-ci} that $c_2$ can be easily computed if $x_1$ is known.
Thus, $x_2$ can be computed with the obtained $x_1$. Similarly, for
all other $i\ge 2$,  it is observed from \eqref{eq-ci} that $c_i$
only depends on the value of previous $\{x_1,\cdots, x_{i-1}\}$.
Thus, using the same approach, all the remaining $x_i$ can be
computed sequentially.

Now, we proceed to obtain the solution for $\tau_i^*, \forall i \in
\{1,\cdots,K\}$. Based on the fact that $\sum_{i=0}^K\tau^*_i = 1$
and $\frac{\sum_{j=0}^{i-1} \tau_j^*}{\tau_i^*}=x_i$, $\forall i \in
\{1,\cdots,K\}$, with the obtained value of $x_i$, the optimal
$\tau_i^*$ can be obtained as
\begin{align}
\tau_K^*&=\frac{1}{1+x_K},\\
\tau_{i}^*&=\frac{1-\sum_{j=i+1}^{K}\tau_j^*}{1+x_{i}},\forall i \in \{K-1,\cdots,1\},\\
\tau_{0}^*&=1-(\tau_K^*+\cdots+\tau_1^*), \label{eq-tau0}
\end{align}
where $x_i$ is given by \eqref{eq-xi}.

For convenience of computing the optimal time allocation, the
following algorithm is proposed.

\begin{algorithm}[htb]
\caption{Optimal Time Allocation Computation}
\label{alg:optimalTmax}
\begin{algorithmic}[1]

\STATE Initialize: $c_1=0$, $x_0=0$,

\FOR{i=1:K}

\STATE{$x_i=\frac{1}{\gamma_i}\left(e^{\mathcal
{W}\left(\frac{\gamma_i-1}{e^{c_i+1}}\right)+c_i+1}-1\right)$,}

\STATE{$c_{i+1}=\sum_{k=1}^{i} \frac{\gamma_k}{\gamma_k x_k+1}$,}

\ENDFOR

\STATE Compute $\tau_K$ by $\tau_K^*=\frac{1}{1+x_K}$.

\FOR{i=K-1:0}

\STATE{$\tau_{i}^*=\frac{1-\sum_{j=i+1}^{K}\tau_j^*}{1+x_{i}}$,}

\ENDFOR

\STATE{Output: $\tau_0^*, \cdots, \tau_K^*$.}

\end{algorithmic}
\end{algorithm}

It is observed that Algorithm  \ref{alg:optimalTmax} is a two-pass
algorithm: One pass for sequentially computing $x_i$ and the other
pass for computing $\tau_i$ in reverse order. Therefore, the
algorithm need not be completely rerun when extending from $K$ users
to $K+1$ users. Instead, we only need to rerun the second pass. This
is due to the fact that the value $x_{K+1}$ does not affects the
computation of $x_i$ with $i<K+1$, since the first pass is
sequential. This indicates that the proposed algorithm has a good
scalability.

Another interesting observation is that the sum throughput of the
proposed WPCN is non-decreasing with the increasing of the number of
users despite having the same total time constraint, which is given
in the following theorem.

\begin{thm}
The sum throughput of the proposed WPCN is non-decreasing with the
increasing of the number of users.
\end{thm}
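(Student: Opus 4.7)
The plan is a direct embedding: any feasible $K$-user allocation can be lifted to a feasible $(K+1)$-user allocation that achieves the same sum throughput, so the optimum is nondecreasing in $K$. Let $\mathcal{T}^*_K$ denote the optimal value of Problem~\ref{Problem-ThroughputMaximization} with $K$ users and $\boldsymbol{\tau}^{(K)*}=[\tau_0^{(K)*},\ldots,\tau_K^{(K)*}]^T$ a maximizer. For the $(K+1)$-user instance (same system plus one additional user with channel parameter $\gamma_{K+1}$), I would consider the candidate allocation $\tilde{\boldsymbol{\tau}}=[\tau_0^{(K)*},\tau_1^{(K)*},\ldots,\tau_K^{(K)*},0]^T$ obtained by appending a new user with zero transmission time.

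First I would verify feasibility. Nonnegativity \eqref{Eq-ThroughputMaximization-conng} is inherited entry-wise, and by Proposition~\ref{prop-equal1} we have $\sum_{i=0}^K\tau_i^{(K)*}=1$, so appending a zero keeps the budget \eqref{Eq-ThroughputMaximization-con} tight. Next I would evaluate the sum throughput. For $i\in\{1,\ldots,K\}$, neither $\tilde{\tau}_i$ nor the partial sum $\sum_{j=0}^{i-1}\tilde{\tau}_j$ is altered by appending an entry at position $K+1$, so the first $K$ summands sum to exactly $\mathcal{T}^*_K$. The $(K+1)$st summand is the only term needing attention, because $\tau\ln(1+c/\tau)$ is an indeterminate form at $\tau=0$; I would invoke the continuous extension $\lim_{\tau\to 0^+}\tau\ln(1+c/\tau)=0$, valid for any finite $c\ge 0$, to assign that term the value $0$. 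Hence $\mathcal{T}(\tilde{\boldsymbol{\tau}})=\mathcal{T}^*_K$, and by optimality of the $(K+1)$-user problem, $\mathcal{T}^*_{K+1}\ge\mathcal{T}^*_K$, which is the claim.

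I do not foresee a substantial obstacle; the only delicate point is the boundary evaluation at $\tilde{\tau}_{K+1}=0$. If one prefers to stay in the strict interior $\tau_i>0$, the same conclusion follows from the perturbed allocation $\tilde{\tau}_i=(1-\epsilon)\tau_i^{(K)*}$ for $i=0,\ldots,K$ together with $\tilde{\tau}_{K+1}=\epsilon$: because $\tau_i R_i$ is unchanged (up to the common factor $1-\epsilon$) under a uniform rescaling of the entire time vector, the sum throughput becomes $(1-\epsilon)\mathcal{T}^*_K+\epsilon\ln\bigl(1+\gamma_{K+1}(1-\epsilon)/\epsilon\bigr)$, which exceeds $\mathcal{T}^*_K$ for all sufficiently small $\epsilon>0$ since the logarithmic term grows without bound as $\epsilon\to 0^+$. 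Either construction yields the monotonicity claim.
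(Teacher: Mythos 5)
Your proposal is correct and uses essentially the same argument as the paper: embed the optimal $K$-user allocation into the $(K+1)$-user problem by giving the new user zero transmission time, observe the throughput is unchanged, and invoke optimality (the paper merely dresses this up as a contradiction). Your treatment is in fact slightly more careful than the paper's, since you explicitly justify the value $0$ for the term $\tau\ln(1+c/\tau)$ at $\tau=0$ and offer an interior perturbation that additionally shows the increase is strict when $\gamma_{K+1}>0$.
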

\begin{proof}
Consider there are $K$ users in the network. The optimal time
allocation is $\boldsymbol{\tau}^*_{(K)}$ and the maximum throughput
is $\mathcal {T}^*_{(K)}$. Now, we introduce one more user to the
network, and recompute the optimal time allocation
$\boldsymbol{\tau}^\prime_{(K+1)}$ and the maximum throughput
$\mathcal {T}^\prime_{(K+1)}$. Now, we show that $\mathcal
{T}^\prime_{(K+1)}\ge \mathcal {T}^*_{(K)}$.

This can be proved by contradiction.  Suppose $\mathcal
{T}^\prime_{(K+1)}<\mathcal {T}^*_{(K)}$. Now, we consider the
following time allocation $\boldsymbol{\tau}^{\prime\prime}_{(K+1)}$
for the $K+1$ users case. We set the time allocation of the $K$ old
users using $\boldsymbol{\tau}^*_{(K)}$ and set the time allocation
of the new user using $0$. It is clear that under this time
allocation, the resultant throughput denoted by $\mathcal
{T}^{\prime\prime}_{(K+1)}$ is equal to $\mathcal {T}^*_{(K)}$. It
follows that $\mathcal {T}^{\prime\prime}_{(K+1)}>\mathcal
{T}^\prime_{(K+1)}$. This contradicts with our presumption. Thus, it
follows $\mathcal {T}^\prime_{(K+1)}\ge \mathcal {T}^*_{(K)}$.
\end{proof}

\section{Total-time Minimization} \label{Sec-TTM}

\subsection{Problem Formulation}\label{Sec-ProbFormulation}
Let $D_i$ be the minimum amount of information that user $i$ has to
send back to the HAP in each data collection cycle. Then, we have
the following constraints
\begin{align}
\tau_i\ln\left(1+\frac{\gamma_i\sum_{j=0}^{i-1}
\tau_j}{\tau_i}\right) \ge D_i, ~\forall i \in\{1,2,\cdots,K\}.
\end{align}
We assume $D_i>0$ for all users, otherwise the user with $D_i=0$ can
be omitted from the system to minimize the data collection time for
that user.

In this section, we are interested in minimizing the completion time
of charging and transmission of all users' data in the proposed
WPCN, i.e., $\sum_{i=1}^K \tau_i$, by determining the optimal time
allocation strategy. Mathematically, this time minimization problem
can be written as
\begin{pro}\label{Problem-tauminimization}
\begin{align}
\min_{\boldsymbol{\tau}} ~&\sum_{i=0}^K \tau_i,\\
\mbox{s. t.}~&~\tau_i \ge 0, ~\forall i \in\{0,1,2,\cdots,K\},\label{Problem-tauminimization-obj}\\
\tau_i&\ln\left(1+\frac{\gamma_i\sum_{j=0}^{i-1}
\tau_j}{\tau_i}\right) \ge D_i, ~\forall i \in\{1,2,\cdots,K\},
\label{Problem-tauminimization-con}
\end{align}
\end{pro}
where $\boldsymbol{\tau}$ is defined as
$\boldsymbol{\tau}=[\tau_0,\cdots,\tau_K]^T$.

For convenience of expression, we refer to the constraint specified
by \eqref{Problem-tauminimization-con} when $i=k$ as the $k$th
constraint. First we observe that the $K$th constraint holds with
equality.
\begin{prop}
The optimal time allocation $\boldsymbol{\tau}^*$ of Problem
\ref{Problem-tauminimization} must satisfy the $K$th constraint with
equality, i.e., $ \tau_K^*\ln\left(1+\frac{\gamma_K\sum_{j=0}^{K-1}
\tau_j^*}{\tau_K^*}\right)=D_K$.
\end{prop}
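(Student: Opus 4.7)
The plan is to argue by contradiction: assume the $K$th constraint is strict at the optimum, then construct a strictly better feasible point by shrinking $\tau_K^*$, contradicting optimality. The key observation is that $\tau_K$ appears only in the $K$th constraint (since the $i$th constraint for $i<K$ involves only $\tau_0,\ldots,\tau_i$), so perturbing $\tau_K$ alone cannot violate any other constraint.

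First I would fix the other components at their optimal values and study the function
\begin{equation}
f(\tau_K)\triangleq \tau_K\ln\left(1+\frac{\gamma_K A^*}{\tau_K}\right), \qquad A^*\triangleq \sum_{j=0}^{K-1}\tau_j^*,
\end{equation}
on $\tau_K>0$. Since $D_K>0$, feasibility forces $A^*>0$ (otherwise the $K$th constraint reads $0\ge D_K>0$). A direct differentiation gives
\begin{equation}
f'(\tau_K)=\ln\!\left(1+\frac{\gamma_K A^*}{\tau_K}\right)-\frac{\gamma_K A^*/\tau_K}{1+\gamma_K A^*/\tau_K},
\end{equation}
which, setting $u=\gamma_K A^*/\tau_K>0$, equals $\ln(1+u)-u/(1+u)$. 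This quantity vanishes at $u=0$ and has positive derivative $u/(1+u)^2$ in $u$, so it is strictly positive for $u>0$. Hence $f$ is strictly increasing in $\tau_K$, and $f$ is continuous on $(0,\infty)$ with $f(\tau_K)\to 0$ as $\tau_K\to 0^+$.

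Now suppose, for contradiction, that $f(\tau_K^*)>D_K$. By continuity and strict monotonicity of $f$, there exists $\epsilon>0$ such that $\tilde\tau_K\triangleq \tau_K^*-\epsilon>0$ still satisfies $f(\tilde\tau_K)\ge D_K$. Define $\tilde{\boldsymbol{\tau}}$ by replacing $\tau_K^*$ with $\tilde\tau_K$ and leaving all other components unchanged. Nonnegativity still holds, and for $i<K$ the $i$th constraint is unaffected (it does not depend on $\tau_K$), so $\tilde{\boldsymbol{\tau}}$ is feasible for Problem~\ref{Problem-tauminimization}. But $\sum_{i=0}^K\tilde\tau_i=\sum_{i=0}^K\tau_i^*-\epsilon<\sum_{i=0}^K\tau_i^*$, contradicting the optimality of $\boldsymbol{\tau}^*$. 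Therefore the $K$th constraint must hold with equality.

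The argument is essentially routine; the only mild obstacle is verifying the monotonicity of $f(\tau_K)$, which is dispatched by the elementary inequality $\ln(1+u)>u/(1+u)$ for $u>0$. A small care point is to rule out $A^*=0$ a priori, which is immediate from $D_K>0$.
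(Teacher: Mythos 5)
Your proof is correct and follows essentially the same route as the paper: a contradiction argument based on the strict monotonicity of $f(x)=x\ln(1+c/x)$ in $x$, which lets you shrink $\tau_K^*$ while preserving feasibility and strictly reducing the objective. You supply some details the paper leaves implicit (the explicit derivative computation, the check that the earlier constraints do not involve $\tau_K$, and ruling out $\sum_{j=0}^{K-1}\tau_j^*=0$), but the underlying idea is identical.
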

\begin{proof}
This can be proved by contradictory. Suppose an optimal solution
$\tilde{\boldsymbol{\tau}}=[\tilde{\tau}_0, \cdots,
\tilde{\tau}_K]^T$ satisfying $
\tau_K^*\ln\left(1+{\gamma_K\sum_{j=0}^{K-1}
\tau_j^*}/{\tau_K^*}\right)>D_K$. Now, we consider the function
$f(x)\triangleq x\log\left(1+c/x\right)$, where $c$ is a constant.
It can be verified that $f(x)$ is a monotonic increasing function
with respect to $x$ when $x>0$. Thus, by fixing $\tilde{\tau}_0,
\cdots, \tilde{\tau}_{K-1}$, we can always find a $\tau_K^\prime$
such that $ \tau_K^\prime\ln\left(1+{\gamma_K \sum_{j=0}^{K-1}
\tilde{\tau}_j}/{\tau_K^\prime}\right)=D_K$, and it is clear that
$\tau_K^\prime<\tilde{\tau}_K$. This contradicts with our
presumption that $\tilde{\boldsymbol{\tau}}$ is optimal. Thus, the
optimal solution must satisfy the  $K$th constraint with equality.
\end{proof}

We observe that each user's energy harvesting time is coupled with
the transmission time of all the users before him/her. It is the
coupling among these constraints that makes the problem difficult to
solve. Thus, to solve Problem \ref{Problem-tauminimization}, we
first investigate the properties related that to the constraints. To
this end, the constraints given in
\eqref{Problem-tauminimization-con} can be rewritten as
\begin{align}
\sum_{j=0}^{i-1}\tau_j&\ge V_i(\tau_i),~\forall i
\in\{1,2,\cdots,K\},\label{eq-conK}
\end{align}
where $V_i(\tau_i)$ is defined as
\begin{align}
V_i(\tau_i)\triangleq\frac{\tau_i}{\gamma_i}\left(e^{\frac{D_i}{\tau_i}}-1\right),~\forall
i \in\{1,2,\cdots,K\}.
\end{align}
Proposition states that $V_i(\cdot)$ is a strictly decreasing
function.

\begin{prop}\label{proposition-decreasingfn}
The function $V_i(\tau_i)$ is a strictly decreasing function with
respect to $\tau_i$.
\end{prop}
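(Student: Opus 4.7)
My plan is to prove Proposition \ref{proposition-decreasingfn} by directly differentiating $V_i(\tau_i)$ with respect to $\tau_i$ and showing the derivative is strictly negative for all $\tau_i > 0$ (recalling that $D_i > 0$ and $\gamma_i > 0$).

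First I would compute
\begin{align}
\frac{dV_i}{d\tau_i} = \frac{1}{\gamma_i}\left(e^{D_i/\tau_i} + \tau_i \cdot e^{D_i/\tau_i}\cdot\left(-\frac{D_i}{\tau_i^2}\right) - 1\right) = \frac{1}{\gamma_i}\left(e^{D_i/\tau_i}\left(1 - \frac{D_i}{\tau_i}\right) - 1\right).
\end{align}
To argue this quantity is strictly negative, I would substitute $y \triangleq D_i/\tau_i > 0$ and reduce the claim to showing $e^{y}(1-y) < 1$ for all $y > 0$, or equivalently (after multiplying both sides by $e^{-y} > 0$), $1 - y < e^{-y}$.

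The final step is then to invoke the standard strict inequality $e^{x} > 1 + x$ for all real $x \neq 0$ with $x = -y$, which gives $e^{-y} > 1 - y$ for every $y > 0$. Combined with the positivity of $1/\gamma_i$, this yields $dV_i/d\tau_i < 0$ on $(0,\infty)$, so $V_i$ is strictly decreasing in $\tau_i$.

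There is no real obstacle here; the only care required is to keep track of the sign when converting $e^{y}(1-y) < 1$ into the canonical $e^{x} > 1 + x$ form, and to note that the inequality is strict (not just weak) because $y = D_i/\tau_i \neq 0$ whenever $D_i > 0$. This strictness is what delivers the \emph{strictly} decreasing conclusion claimed in the proposition, as opposed to merely non-increasing.
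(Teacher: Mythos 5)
Your proof is correct. Both you and the paper start identically, by differentiating $V_i$ and arriving at the same expression for $dV_i/d\tau_i$, but you diverge at the final step of showing that this derivative is negative, and your route is the more robust one. The paper writes the derivative as $-\bigl((D_i-\tau_i)e^{D_i/\tau_i}+\tau_i\bigr)/(\gamma_i\tau_i)$ and bounds it by replacing $e^{D_i/\tau_i}$ with $1$, invoking only $e^x>1$ for $x>0$; that replacement gives a valid lower bound on $(D_i-\tau_i)e^{D_i/\tau_i}$ only when the coefficient $D_i-\tau_i$ is nonnegative, so the paper's inequality ``a'' silently fails (indeed reverses) in the regime $\tau_i>D_i$, even though the conclusion remains true there. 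Your substitution $y=D_i/\tau_i$ and appeal to the strict inequality $e^{-y}>1-y$ for $y>0$ handles both regimes uniformly: when $y\ge 1$ the factor $1-y$ is nonpositive and the claim $e^{y}(1-y)<1$ is immediate, and when $0<y<1$ the classical bound does the work. So your argument is not merely an alternative, it actually repairs a gap in the paper's case analysis, at the modest cost of invoking the slightly stronger (but still completely standard) inequality $e^{x}>1+x$ for $x\neq 0$ in place of $e^{x}>1$.
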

\begin{proof}
First we note that $\tau_i>0$ so as to serve the strictly positive
$D_i$. The first order derivative of $V_i(\tau_i)$ is
\begin{align}
\frac{dV_i(\tau_i)}{d\tau_i}&=-\frac{(D_i-\tau_i)e^{D_i/\tau_i}+\tau_i}{\gamma_i\tau_i}\nonumber\\
&\stackrel{a}{<}-\frac{(D_i-\tau_i)+\tau_i}{\gamma_i\tau_i} <0
\end{align}
where the inequality ``a'' follows from $e^x>1$ for $x>0$ and the
assumption that $D_i>0$.
\end{proof}

By using a graphical approach, the subsequent key result can be
proved easily and key insights can be drawn intuitively. Hence, we
illustrate the graphical relationship between the curve
$V_i(\tau_i)=\frac{\tau_i}{\gamma_i}\left(e^{\frac{D_i}{\tau_i}}-1\right)$
and the line $\sum_{j=0}^{i-1}\tau_j=-\tau_i+C_i$ in
Fig.~\ref{Fig-kxMinTimeDemoFig4}. We note that
$C_i=\sum_{j=0}^{i}\tau_j$ represents the completion time until user
$i$ sends its data; in particular $C_K$ is the total completion time
that we want to minimize. Lemma~\ref{lem:1} states the critical
values of $C_i$ and $\tau_i$ when the curve and line meet at one
unique point.

\begin{figure}
        \centering
        \includegraphics*[width=12cm]{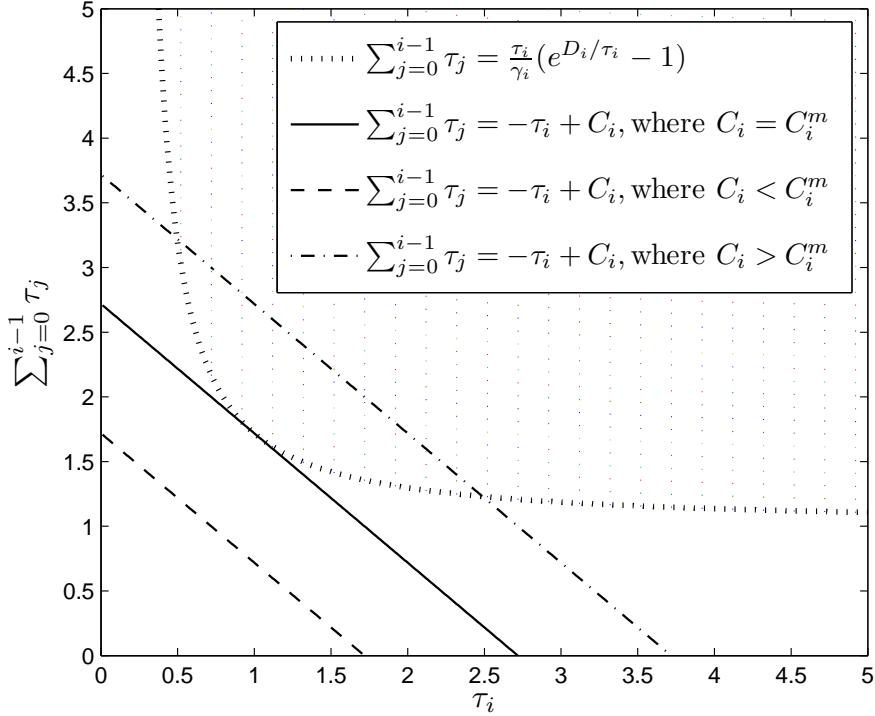}
        \caption{Graphic representation of the $i$th constraint}
        \label{Fig-kxMinTimeDemoFig4}
\end{figure}

\begin{lem}\label{lem:1}
Denote $C_i^m$ as $C_i$ such that the line
$\sum_{j=0}^{i-1}\tau_j=-\tau_i+C_i$ has only one intersection point
with the curve $V_i(\tau_i)$. Denote the corresponding $\tau_i$ as
$\tau_i^m$.
Then 
\begin{align}\label{eq-tauim}
\tau_i^m&=\frac{D_i}{\mathcal
{W}\left(\frac{\gamma_i-1}{e}\right)+1} \\
C_i^m&=\frac{D_i }{\gamma_i}e^{\mathcal
{W}\left(\frac{\gamma_i-1}{e}\right)+1}. \label{eq-cim}
\end{align}
where $\mathcal {W}\left(\cdot\right)$ is the Lambert W-Function
\cite{kangTWC}.
\end{lem}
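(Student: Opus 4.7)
The plan is to interpret the unique-intersection condition geometrically as a tangency condition between the decreasing curve $V_i(\tau_i)$ and the line $\sum_{j=0}^{i-1}\tau_j = -\tau_i + C_i$ (which has slope $-1$ in the $(\tau_i,\sum_{j=0}^{i-1}\tau_j)$ plane). To make this rigorous, I would first check that $V_i$ is strictly convex by computing $V_i''(\tau_i) = \frac{D_i^2}{\gamma_i\tau_i^3}e^{D_i/\tau_i} > 0$. Combined with Proposition \ref{proposition-decreasingfn}, this means a downward-sloping line of slope $-1$ meets $V_i$ in $0$, $1$, or $2$ points, and the transitional ``unique intersection'' case is precisely tangency. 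So $(\tau_i^m,C_i^m)$ is characterized by the two equations $V_i'(\tau_i^m) = -1$ and $V_i(\tau_i^m) = C_i^m - \tau_i^m$.

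Next I would compute the derivative
\begin{align}
V_i'(\tau_i) = \frac{1}{\gamma_i}\Bigl[e^{D_i/\tau_i}\bigl(1 - D_i/\tau_i\bigr) - 1\Bigr]
\end{align}
and set it equal to $-1$. Introducing the substitution $u \triangleq D_i/\tau_i^m$ reduces the tangency equation to $(1-u)e^u = 1 - \gamma_i$, or equivalently $(u-1)e^{u-1} = (\gamma_i - 1)/e$. This is exactly in the Lambert-W form $ze^z = w$, so by the definition of $\mathcal{W}(\cdot)$ we get $u - 1 = \mathcal{W}\!\left(\frac{\gamma_i-1}{e}\right)$, which immediately yields $\tau_i^m = D_i/u = D_i/\bigl(\mathcal{W}((\gamma_i-1)/e)+1\bigr)$, matching \eqref{eq-tauim}.

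For $C_i^m$, I would use the tangency point identity $C_i^m = \tau_i^m + V_i(\tau_i^m) = \tau_i^m\bigl(1 + (e^u - 1)/\gamma_i\bigr) = \frac{D_i}{u}\cdot\frac{\gamma_i + e^u - 1}{\gamma_i}$. The key simplification is to rewrite $(1-u)e^u = 1-\gamma_i$ as $ue^u = e^u - 1 + \gamma_i$, which collapses the expression to $C_i^m = \frac{D_i e^u}{\gamma_i} = \frac{D_i}{\gamma_i}e^{\mathcal{W}((\gamma_i-1)/e)+1}$, matching \eqref{eq-cim}.

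The main obstacle I expect is two-fold: first, recognizing the transcendental derivative equation as a Lambert-W instance (the substitution $u = D_i/\tau_i$ is the crucial pivot); second, carrying out the algebraic simplification $ue^u = \gamma_i + e^u - 1$ that turns the somewhat awkward closed form for $C_i^m$ into the clean exponential expression in \eqref{eq-cim}. Verifying $V_i$'s convexity is routine but important, since it is what guarantees that the tangency solution is the \emph{unique}-intersection case rather than some spurious critical point.
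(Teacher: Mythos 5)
Your proof is correct and follows essentially the same route as the paper's: both characterize the unique intersection point by the tangency conditions $V_i'(\tau_i^m)=-1$ and $V_i(\tau_i^m)=-\tau_i^m+C_i^m$, and then solve via the Lambert-$\mathcal{W}$ substitution $u=D_i/\tau_i^m$. Your added convexity check $V_i''(\tau_i)>0$ and the explicit algebraic step $ue^u=\gamma_i+e^u-1$ simply make rigorous two points the paper leaves implicit (that unique intersection means tangency, and how $C_i^m$ collapses to $\tfrac{D_i}{\gamma_i}e^{D_i/\tau_i^m}$).
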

\begin{proof}
At the unique intersection point, we have
\begin{align}
\left. \frac{dV_i(\tau_i)}{d\tau_i}\right|_{\tau_i=\tau_i^m}&=-1,\label{eq-granegative1}\\
V_i(\tau_i^m)&=-\tau_i^m+C_i^m.\label{eq-yequal}
\end{align}
From \eqref{eq-granegative1}, it follows that
$-\frac{(D_i-\tau_i^m)e^{D_i/\tau_i^m}+\tau_i^m}{\gamma_i\tau_i^m}=-1$.
Then, $\tau_i^m$ can be obtained as given in \eqref{eq-tauim}. From
\eqref{eq-yequal}, it follows that $ C_i^m=\frac{D_i
}{\gamma_i}e^{\frac{D_i}{\tau_i^m}}$. Then, substituting
\eqref{eq-tauim} into this equation, $C_i^m$ can be obtained as
given in \eqref{eq-cim}.
\end{proof}

From Fig. \ref{Fig-kxMinTimeDemoFig4}, we have the following
observations:
\begin{itemize}
  \item When $C_i<C_i^m$, there is no intersection between the line and
the curve. When $C_i>C_i^m$, there are two intersection points
between the line and the curve.
  \item The shaded area is the feasible region specified by the $i$th constraint.
\end{itemize}


With these obtained properties and observations, we are now ready
for solving Problem \ref{Problem-tauminimization} optimally, which
is presented in the following section.

\subsection{Optimal Solution} In this subsection, we derive the optimal
solution of Problem \ref{Problem-tauminimization} based on the
results obtained in the previous subsection.


\begin{thm}\label{theorem-opTime}
The optimal solution of Problem \ref{Problem-tauminimization} is
obtained at the largest $k$ such that: when $C_k=C_k^m$ and
$\tau_k=\tau_k^m$, there exists a time allocation $[\tau_0,\cdots,
\tau_{k-1}]^T$ that satisfies the $(k-j)$th constraints, $\forall
j=1,\cdots,k$.
\end{thm}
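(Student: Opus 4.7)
The plan is to proceed in two stages: first construct a feasible allocation from the maximal index $k^\star$ specified in the theorem, and then show its total time is minimal by a maximality-based contradiction. Throughout I would lean on Proposition \ref{proposition-decreasingfn} (strict monotonicity of $V_i$), Lemma \ref{lem:1} (the tangent point $(\tau_i^m, C_i^m)$ on the feasibility boundary), the geometric picture of Fig.~\ref{Fig-kxMinTimeDemoFig4}, and the proposition that the $K$th constraint is active at any optimum.

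First I would observe that the largest such $k^\star$ is well-defined: for $k=1$ the hypothesis is vacuous since the only requirement reduces to $\tau_0 = C_1^m - \tau_1^m \ge 0$, so $k^\star \ge 1$. With $k^\star$ in hand, take the guaranteed $(\tau_0,\dots,\tau_{k^\star-1})$ that sums to $C_{k^\star}^m - \tau_{k^\star}^m$ and satisfies constraints $1$ through $k^\star-1$; set $\tau_{k^\star} = \tau_{k^\star}^m$, which makes the $k^\star$th constraint active by the tangency in Lemma \ref{lem:1}; and for $i=k^\star+1,\dots,K$ define $\tau_i$ recursively as the smallest positive value making the $i$th constraint hold with equality, given the running sum $\sum_{j=0}^{i-1}\tau_j$. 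Proposition \ref{proposition-decreasingfn} guarantees this recursive step is well-defined, since the running sum only grows across the recursion while $V_i(\tau_i)$ is strictly decreasing with infimum $D_i/\gamma_i$. Let $C_K^{(k^\star)}$ denote the resulting total time.

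Second, I would argue optimality by contradiction. Suppose some feasible $\tilde{\boldsymbol{\tau}}$ achieves $\tilde C_K < C_K^{(k^\star)}$. Any feasible allocation must satisfy $\tilde C_i \ge C_i^m$ for every $i$, for otherwise the line $y = -\tau_i + \tilde C_i$ fails to meet the curve $V_i(\tau_i)$ anywhere (Fig.~\ref{Fig-kxMinTimeDemoFig4}), violating the $i$th constraint. I would then continuously deform $\tilde{\boldsymbol{\tau}}$, shrinking components from the tail and sliding the partial sums downward while tracking when each constraint is active via the monotonicity of $V_i$. The aim is to produce some $k > k^\star$ at which $\tilde C_k = C_k^m$ with the preceding entries $(\tilde\tau_0,\dots,\tilde\tau_{k-1})$ still satisfying constraints $1$ through $k-1$. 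Such a $k$ would satisfy exactly the hypothesis of the theorem, contradicting the maximality of $k^\star$.

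The main obstacle I foresee is the second stage: rigorously extracting an index $k > k^\star$ and an accompanying valid allocation from the hypothetical improvement $\tilde{\boldsymbol{\tau}}$. The delicate point is that shrinking one $\tau_i$ couples through the chain $\sum_{j<i'}\tau_j$ for every downstream $i' > i$, so tightening one constraint may inadvertently tighten or violate others. I expect this requires sweeping the components in reverse order $K, K-1, \dots$ and invoking Lemma \ref{lem:1} repeatedly to show that, at each stage, either a downstream constraint becomes simultaneously tangent (yielding the desired $k$) or we can keep reducing the sum — the latter being impossible past $C_K^m$. A second subtlety, which should be handled in passing, is justifying that the $i$th constraint is active for every $i > k^\star$ at the optimum, since otherwise the recursive construction in stage one would not pin down an extremum of $C_K$.
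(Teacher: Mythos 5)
Your first stage (the construction) matches what the paper implements in Algorithm~3, and your observation that feasibility forces $\tilde C_i \ge C_i^m$ for every $i$ is correct and is indeed one of the paper's key graphical facts. The genuine gap is your second stage: the optimality argument is not actually carried out, and the ``continuous deformation'' plan you sketch is not aimed at the right object. If a feasible $\tilde{\boldsymbol{\tau}}$ had $\tilde C_K < C_K^{(k^\star)}$, there is no reason that shrinking its tail components must produce an index $k>k^\star$ at which $\tilde C_k$ hits $C_k^m$ exactly with a feasible prefix; the witness that contradicts maximality of $k^\star$ does not come from deforming $\tilde{\boldsymbol{\tau}}$ at all, but from the constants $(C_j^m,\tau_j^m)$ and the constructed partial sums. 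Concretely, the missing lemma is a domination/monotonicity statement: writing $\phi_j(C)=C+V_j^{-1}(C)$ for the smallest achievable $C_j$ given $C_{j-1}=C$, one has $\tilde C_j\ge\phi_j(\tilde C_{j-1})$, and $\phi_j$ is increasing precisely on $C\ge V_j(\tau_j^m)=C_j^m-\tau_j^m$ (by Lemma~\ref{lem:1} and Proposition~\ref{proposition-decreasingfn}). The maximality of $k^\star$ is exactly what guarantees $C_{j-1}^{(k^\star)}\ge C_j^m-\tau_j^m$ for every $j>k^\star$ (otherwise $k=j$ would itself satisfy the theorem's hypothesis, since enlarging $C_{j-1}$ only relaxes all earlier constraints), and with that in hand a forward induction from $\tilde C_{k^\star}\ge C_{k^\star}^m$ gives $\tilde C_K\ge C_K^{(k^\star)}$ with no contradiction argument needed. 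Without this step your proof does not establish optimality; it only establishes feasibility of the candidate.

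For comparison, the paper argues in the opposite direction: it sweeps backward from user $K$, notes that $C_K^m$ is a lower bound coming from the $K$th constraint alone, and checks whether the remaining budget $C_{K-1}=C_K^m-\tau_K^m$ can support users $1,\dots,K-1$. The check is made greedy and without loss of generality by always selecting the \emph{smaller} root $\tau'$ of $V_i(\tau_i)=-\tau_i+C_i$, because by Proposition~\ref{proposition-decreasingfn} this maximizes $C_{i-1}$ and hence simultaneously relaxes every earlier constraint; if the check fails, the $(K-1)$th constraint forces $C_{K-1}\ge C_{K-1}^m$ and hence $C_K>C_K^m$, so the binding tangency must move to user $K-1$, and so on. That smaller-root domination argument is the ingredient your sketch lacks; adding it (or the equivalent $\phi_j$-monotonicity induction above) is what would close the proof.
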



\begin{proof}
From the previous section, it is known that minimizing $\sum_{i=0}^K
\tau_i$ is equivalent to minimizing $C_K$. Then, we plot the line
$\sum_{i=0}^{K-1} \tau_i=-\tau_K+C_K$ and the curve
$\sum_{i=0}^{K-1}
\tau_i=\frac{\tau_K}{\gamma_K}\left(e^{\frac{D_K}{\tau_K}}-1\right)$
in Fig. \ref{fig:subfig1}. Since the curve represents the $K$th
constraint, the minimum feasible $C_K$ under the $K$th constraint is
$C_K=C_K^m$, and the corresponding $\tau_K$ is given by
$\tau_K=\tau_K^m$, where $\tau_K^m$ and  $C_K^m$  are computed by
\eqref{eq-tauim} and \eqref{eq-cim}, respectively. Then, $C_{K-1}$
can be computed by $C_{K-1}=C_K-\tau_K=C_K^m-\tau_K^m$. Now, we have
the following two possible cases:
\begin{itemize}
  \item Case 1: $C_{K-1}> C_{K-1}^m$. From Fig. \ref{fig:subfig2}, it is observed that when
  $C_{K-1}>C_{K-1}^m$, there are two intersection points between the line $\sum_{i=0}^{K-2}
\tau_i=-\tau_{K-1}+C_{K-1}$ and the curve $\sum_{i=0}^{K-2}
\tau_i=\frac{\tau_{K-1}}{\gamma_{K-1}}\left(\exp(D_{K-1}/\tau_{K-1})-1\right)$.
Denote the $\tau'$ and $\tau''$, where $\tau'<\tau''$, as the two
solutions for $\tau$, see Fig. \ref{fig:subfig2}. In this case, we
can choose any $\tau_{K-1}$ between $\tau'$ and $\tau''$, and yet
satisfy the $(K-1)$th constraint. However, from the perspective of
satisfying the remaining constraints, we should choose the smallest
possible $\tau_{K-1}$, i.e, choose $\tau_{K-1}=\tau'$, due to the
following. A smaller $\tau_{K-1}$ results in a larger
$\sum_{i=0}^{K-2} \tau_i$, since from
Proposition~\ref{proposition-decreasingfn}, $\sum_{i=0}^{K-2}
\tau_i=\frac{\tau_{K-1}}{\gamma_{K-1}}\left(\exp(D_{K-1}/\tau_{K-1})-1\right)$
is a decreasing function with respective to $\tau_{K-1}$. Since
$C_{K-2}=\sum_{i=0}^{K-2} \tau_i$, a larger $\sum_{i=0}^{K-2}
\tau_i$ implies a larger $C_{K-2}$. A larger $C_{K-2}$ results in
the most relaxed $(K-2)$th constraint, i.e., the largest possible
feasible region. By induction, this also resulted in the most
relaxed $(K-j)$th constraint for $j=3,\cdots, K-1$. 

  \item Case 2: $C_{K-1}<C_{K-1}^m$. From Fig. \ref{fig:subfig2}, it is observed that when $C_{K-1}<C_{K-1}^m$, there is no intersection between the line $\sum_{i=0}^{K-2}
\tau_i=-\tau_{K-1}+C_{K-1}$ and the curve $\sum_{i=0}^{K-2}
\tau_i=\frac{\tau_{K-1}}{\gamma_{K-1}}\left(\exp(D_{K-1}/\tau_{K-1})-1\right)$,
which means that the $(K-1)$th constraint is not satisfied.  In
order to satisfy the $(K-1)$th constraint, we must increase the
value of $C_{K-1}$ to $C_{K-1}^m$. Since $C_{K-1}=C_K-\tau_K$,  the
value of $C_{K-1}$ can be increased by increasing $C_K$ or
decreasing $\tau_K$. If we keep $C_K=C_K^m$ and only decrease the
value of $\tau_K$, the $K$th constraint will not be satisfied. Thus,
the value of $C_K$ must be increased. This indicates $C_K^m$ is no
longer a feasible solution of Problem \ref{Problem-tauminimization}.
Hence it is necessary  to set the tentative optimal solution at
$C_{K-1}=C_{K-1}^m$ and $\tau_{K-1}=\tau_{K-1}^m$.

\end{itemize}


\begin{figure}
 \centering
 \subfigure[The $K$th constraint]{
  \includegraphics[width=0.45\textwidth]{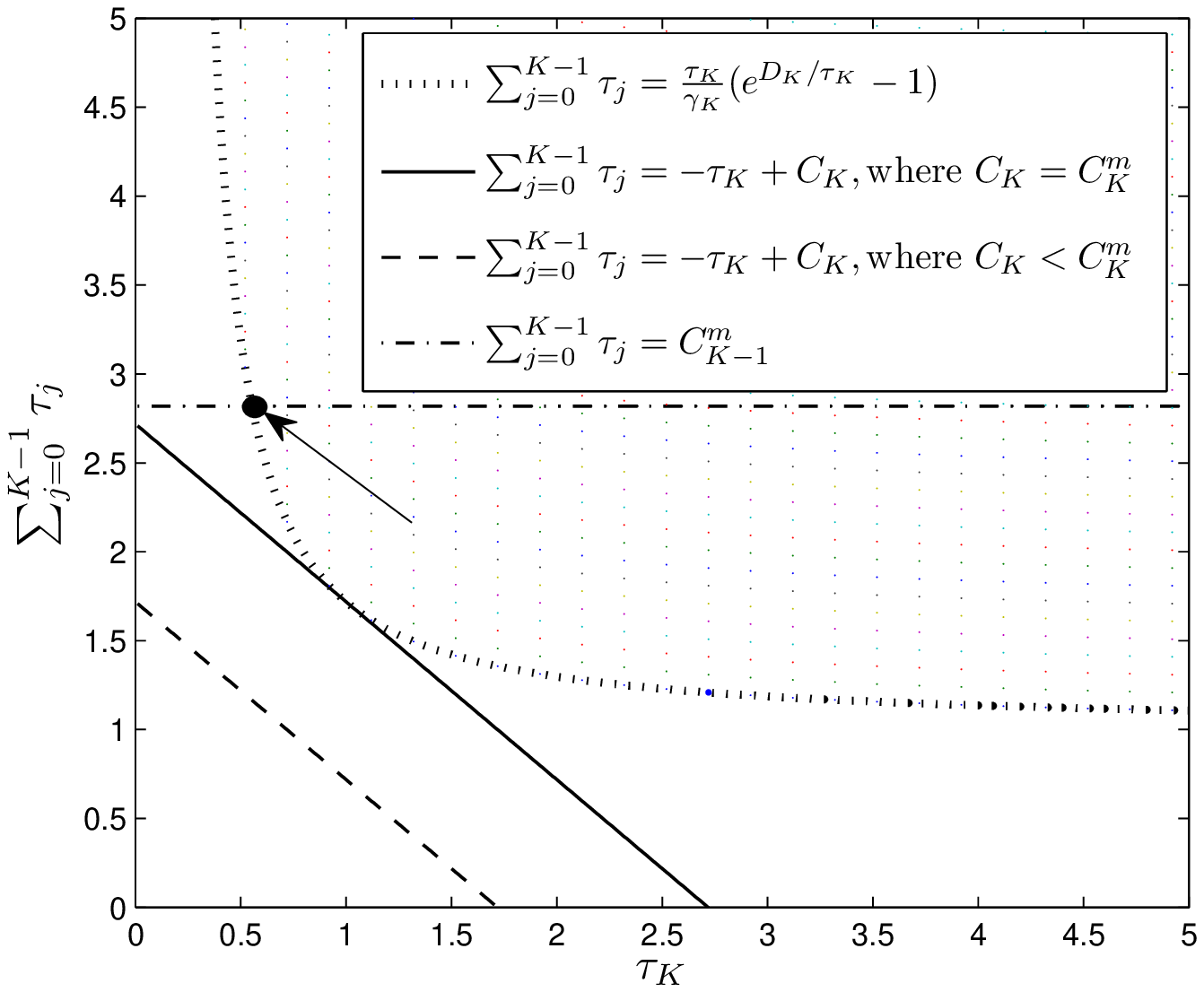}
   \label{fig:subfig1}
   }\hspace{0.01cm}
 \subfigure[The $(K\kern-1mm-\kern-0.5mm1\kern-0.4mm)$th constraint]{
  \includegraphics[width=0.45\textwidth]{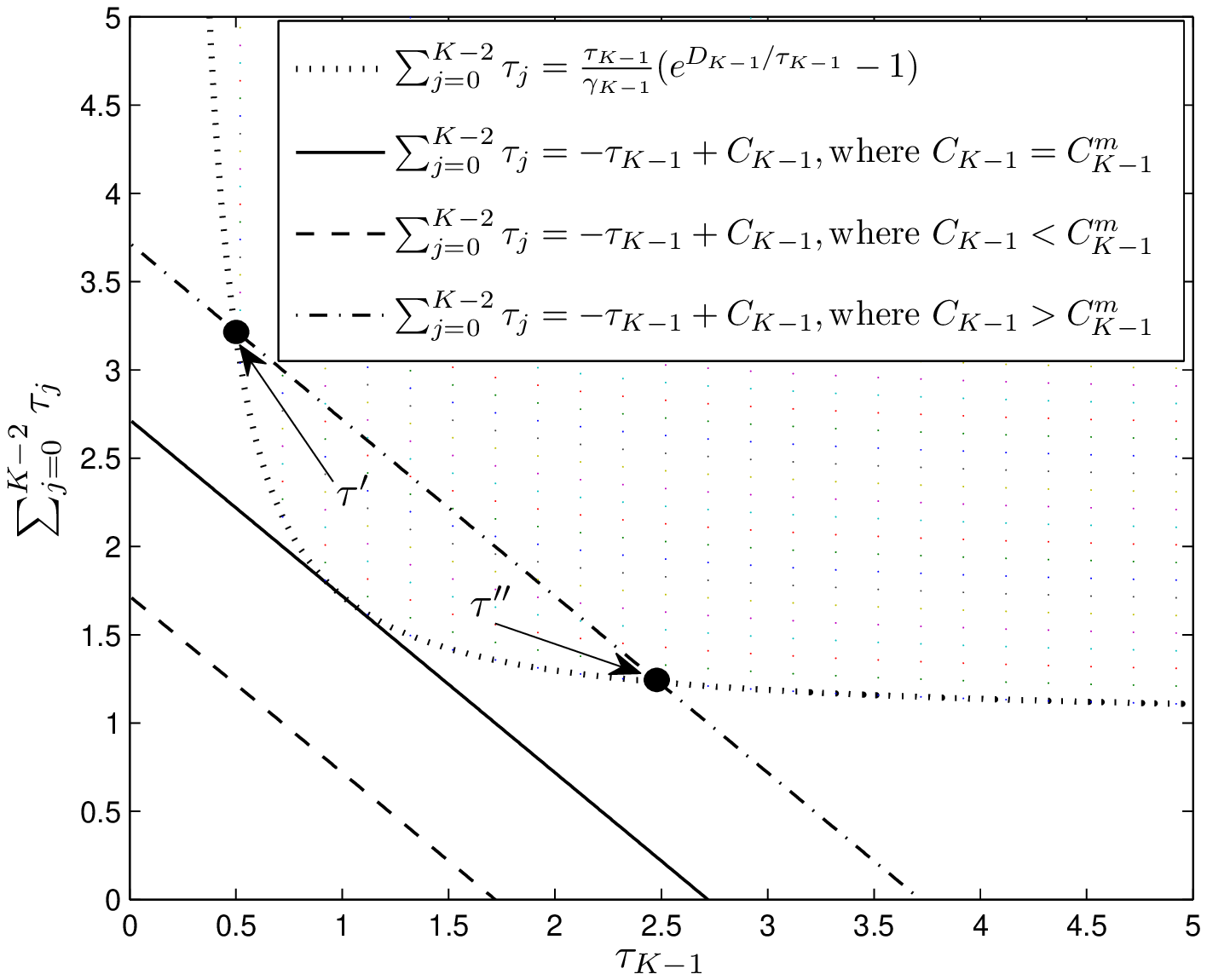}
   \label{fig:subfig2}
   }
  \label{fig:subfigureExample}
 \caption[Optional caption for list of figures]{%
Graphic representation of the $K$th and
$(K\kern-1mm-\kern-0.5mm1\kern-0.4mm)$th constraints}
\end{figure}

If case 1 happens, and all the remaining $C_{K-j}$'s computed by the
approach specified in case 1 satisfy $C_{K-j}> C_{K-j}^m$, $\forall
j=2,\cdots,K-1$, the optimal solution is obtained at $C_K=C_K^m$ and
$\tau_K=\tau_K^m$. If case 2 happens, we start from
$C_{K-1}=C_{K-1}^m$ and $\tau_{K-1}=\tau_{K-1}^m$, and recomputed
$C_{K-j}$, $\forall j=2,\cdots,K-1$.  If all the computed
$C_{K-j}$'s satisfy $C_{K-j}> C_{K-j}^m$, the optimal solution is
now obtained at $C_{K-1}=C_{K-1}^m$ and $\tau_{K-1}=\tau_{K-1}^m$.
Otherwise, we have to repeat this procedure until we find the
largest $k$ such that: when $C_k=C_k^m$ and $\tau_k=\tau_k^m$, all
the computed $C_{k-j}$'s satisfy $ C_{k-j}\ge C_{k-j}^m,~\forall
j=1,\cdots,k-1$.


Theorem \ref{theorem-opTime} is thus proved.
\end{proof}

Thus, the optimal solution of Problem \ref{Problem-tauminimization}
can be obtained in the following two steps: (i) Find the largest $k$
specified in Theorem \ref{theorem-opTime}. (ii) Let $C_k=C_k^m$ and
$\tau_k=\tau_k^m$, and then solve for the remaining the $C_i$'s and
$\tau_i$'s.

The pseudocode for finding the largest $k$ specified in Theorem
\ref{theorem-opTime} is given in Algorithm~\ref{alg:optimal}. In
Line~8 of Algorithm~\ref{alg:optimal}, the operator Root finds the
smaller of the two roots in the equation, i.e., $\tau_i'$ in
Fig.~\ref{fig:subfig2}.
\begin{algorithm}
\caption{Finding the largest $k$ specified in Theorem
\ref{theorem-opTime}} \label{alg:optimal}
\begin{algorithmic}[1]

\STATE Initialize: $\mbox{Flag}=0$, $k=0$.

\STATE{Compute $\tau_{i}^m=\frac{D_i}{\mathcal
{W}\left(\frac{\gamma_i-1}{e}\right)+1}$, $C_i^m=\frac{D_i
}{\gamma_i}e^{\mathcal {W}\left(\frac{\gamma_i-1}{e}\right)+1}$,
$\forall i=1,\cdots,K$.}

\FOR{$j=K:-1:1$}

 \IF{$j>1$}

 \STATE{$C_{j-1}=C_j^m-\tau_{j}^m$;}

 \FOR{$i=j-1:-1:1$}

 \IF{$C_i\ge C_i^m$}

 \STATE{$\tau_{i}=\min\left\{ \mbox{Root}\left(\frac{\tau_i}{\gamma_i}\left(e^{\frac{D_i}{\tau_i}}-1\right)=-\tau_i+C_i\right)\right\}$;}

 \IF{$i>1$}

 \STATE{$C_{i-1}=C_i-\tau_i$;}

 \ELSE
 \STATE{$\mbox{Flag}=1$;}

\ENDIF

 \ELSE
 \STATE{$j=i$;}
 \STATE{$\mbox{break}$;}

\ENDIF

\ENDFOR

\ENDIF

 \IF{$\mbox{Flag}==1$}
 \STATE{Output: $k=j$.}
  \STATE{$\mbox{break}$;}
\ENDIF

 \STATE{Output: $k=1$.}
\ENDFOR

\end{algorithmic}
\end{algorithm}

Using Algorithm \ref{alg:optimal}, we can easily find the largest
$k$ specified by Theorem \ref{theorem-opTime}. Then, it follows that
$C_k^*=C_k^m$ and $\tau_k^*=\tau_k^m$. With this result, for any
$i<k$, $C_i$'s and $\tau_i$'s  can be easily obtained by using the
same method as Algorithm \ref{alg:optimal} (line $4$ to line $10$).
Now, we show how to compute $C_i$'s and $\tau_i$'s for any $i>k$. In
this case, $C_K$ can be obtained by $C_{i+1}^*=C_i^*+\tau_{i+1}^*$,
where $\tau_{i+1}^*$ is obtained by solving the equation
$\frac{\tau_{i+1}^*}{\gamma_{i+1}}\left(e^{\frac{D_{i+1}}{\tau_{i+1}^*}}-1\right)=C_{i}^*$.
The solution of this equation is unique since the curve
$\sum_{i=0}^i
\tau_i=\frac{\tau_{i+1}}{\gamma_{i+1}}\left(e^{\frac{D_{i+1}}{\tau_{i+1}}}-1\right)$
is a monotonic decreasing function of $\tau_{i+1}$, and
$\sum_{i=0}^i \tau_i=C_{i}$ is a horizontal line, which is
illustrated in Fig. \ref{fig:subfig1}. The approach to compute the
optimal time allocation presented above is summarized in the
following algorithm.

\begin{algorithm}[htb]
\caption{Optimal time allocation computation}
\label{alg:optimalTime}
\begin{algorithmic}[1]

 \STATE{Input: $k$, which is obtained by Algorithm \ref{alg:optimal}.}

\STATE Initialize: $C_k^*=C_k^m$ and $\tau_k^*=\tau_k^m$.

\STATE{Compute $C_{k-1}^*=C_k^*-\tau_k^*$.}

\FOR{$i=k-1:-1:1$}

 \STATE{$\tau_{i}^*=\min\left\{ \mbox{Root}\left(\frac{\tau_i^*}{\gamma_i}\left(e^{\frac{D_i}{\tau_i^*}}-1\right)=-\tau_i^*+C_i^*\right)\right\}$;}

 \STATE{$C_{i-1}^*=C_i^*-\tau_i^*$.}

\ENDFOR

\FOR{$i=k+1:+1:K$}

 \STATE{$\tau_{i}^*=\mbox{Root}\left(\frac{\tau_i^*}{\gamma_i}\left(e^{\frac{D_i}{\tau_i^*}}-1\right)=C_{i-1}^*\right)$;}

 \STATE{$C_{i}^*=C_{i-1}^*+\tau_i^*$.}

\ENDFOR

 \STATE{$\tau_0^*=C_1^*-\tau_1^*$.}

  \STATE{Output: $\tau_0^*, \cdots, \tau_K^*$.}

\end{algorithmic}
\end{algorithm}

Algorithm \ref{alg:optimalTime} is an efficient way to obtain an
optimal solution of Problem \ref{Problem-tauminimization}. However,
it is worth pointing out that:
\begin{itemize}
  \item The optimal time allocation of Problem \ref{Problem-tauminimization}
may not be unique when $K\ge 2$. We use a simple example to
illustrate this. Consider the case that $K=2$. We assume that
$C_1=C_2-\tau_2>C_1^m$, when $C_2=C_2^m$ and $\tau_2=\tau_2^m$.
Then, any time allocation on the line $\tau_0=-\tau_1+C_1$ within
the feasible region $\tau_0\ge
\frac{\tau_1}{\gamma_1}\left(e^{\frac{D_1}{\tau_1}}-1\right)$ are
optimal time allocation.
  \item  When $K=1$, the optimal time allocation of Problem \ref{Problem-tauminimization}
is unique, and is given by $\tau_1=^*\frac{D_1}{\mathcal
{W}\left(\frac{\gamma_1-1}{e}\right)+1}$ and $
\tau_0^*=\frac{\tau_1^*}{\gamma_1}\left(e^{\frac{D_1}{\tau_1^*}}-1\right)$.
Details are omitted for brevity.
\end{itemize}

\section{Suboptimal Time Allocation} \label{Sec-STA}
In this section, we propose some suboptimal time allocation schemes
for the sum-throughput maximization and the total-time minimization
problems, respectively. The motivation for proposing these
suboptimal time allocation schemes is two-folds: (i). To see whether
optimization helps in improving the system performance. (ii). To
develop the low-complexity schemes that can achieve near-optimal
performance.

\subsection{Sum-throughput maximization}
In this subsection, we propose two suboptimal time allocation
schemes for the sum-throughput maximization problem, which are given
as follows.

(i). Equal time allocation. The idea of this scheme is to allocate
equal time to each user including the initial charging slot
$\tau_0$, i.e., $\tau_i=\tau_0$, $\forall i \in\{1,\cdots,K\}$.
Then, since $\sum_{i=0}^K\tau_i=1$, it follows that
\begin{align}\tau_i=\frac{1}{K+1}, \forall i
\in\{0,1,\cdots,K\}.\end{align}

(ii). Fixed-TDMA allocation:  The idea of this scheme is to allocate
equal time to each user but leave the initial charing slot $\tau_0$
as an optimization variable. Thus, it follows that
\begin{align}
\tau_i=\frac{1-\tau_0}{K}, \forall i \in\{1,\cdots,K\}.
\end{align}

Substituting the above equation into Problem
\ref{Problem-ThroughputMaximization}, it follows that
\begin{pro}\label{Problem-ThroughputMaximization3}
\begin{align}
\max_{\tau_0}&\sum_{i=1}^K \frac{1-\tau_0}{K}
\ln\left(i+\frac{K\gamma_i
 \tau_0}{1-\tau_0}\right), \label{Eq-ThroughputMaximization3-Obj}\\
\mbox{s.t.}~&~ 0\le\tau_0 \le
1,\label{Eq-ThroughputMaximization3-con}
\end{align}
\end{pro}

Now, we show that Problem \ref{Problem-ThroughputMaximization} is a
convex optimization problem. To show this, we present the following
lemma first.

\begin{lem}\label{Lemma-concave2}
The function of user $i$ given by
$\mathcal{T}^\prime_i(\tau_0)\triangleq\frac{1-\tau_0}{K}
\ln\left(i+\frac{K\gamma_i
 \tau_0}{1-\tau_0}\right)$, $\forall i \in\left\{1, \cdots, K\right\}$ is a concave function of
 $\tau_0$ when $0\le \tau_0 \le 1$.
\end{lem}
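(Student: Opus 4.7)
The plan is to deduce Lemma~\ref{Lemma-concave2} as a direct corollary of Lemma~\ref{Lemma-concave} via a composition argument, rather than computing a fresh Hessian. The key observation is that $\mathcal{T}'_i(\tau_0)$ is nothing more than the restriction of the multivariate throughput function $\mathcal{T}_i(\boldsymbol{\tau})$ from Lemma~\ref{Lemma-concave} to the one-dimensional affine subspace carved out by the Fixed-TDMA substitution.

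Concretely, define the affine map $A:[0,1]\to\mathbb{R}^{K+1}$ by
\begin{equation*}
A(\tau_0) \;=\; \Bigl(\tau_0,\;\tfrac{1-\tau_0}{K},\;\ldots,\;\tfrac{1-\tau_0}{K}\Bigr).
\end{equation*}
Under this substitution, $\sum_{j=0}^{i-1}\tau_j = \tau_0 + (i-1)(1-\tau_0)/K$ and $\tau_i = (1-\tau_0)/K$, so a direct simplification of $\tau_i\ln\!\bigl(1 + \gamma_i\sum_{j=0}^{i-1}\tau_j/\tau_i\bigr)$ produces the single-variable expression $\mathcal{T}'_i(\tau_0)$ defined in the lemma statement. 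The image of $A$ on $[0,1]$ lies entirely in the nonnegative orthant $\{\boldsymbol{\tau}\succcurlyeq\boldsymbol{0}\}$, which is exactly the domain on which Lemma~\ref{Lemma-concave} guarantees concavity of $\mathcal{T}_i$.

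I would then invoke the standard convex-analysis fact (see \cite{Convexoptimization}) that the composition of a concave function with an affine map is concave on the pulled-back domain. Since $\mathcal{T}_i$ is concave on $\{\boldsymbol{\tau}\succcurlyeq\boldsymbol{0}\}$ by Lemma~\ref{Lemma-concave} and $A$ is affine, it follows immediately that $\mathcal{T}'_i = \mathcal{T}_i\circ A$ is concave on $[0,1]$, which is precisely the claim of Lemma~\ref{Lemma-concave2}.

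The only step with any content is the mechanical algebraic verification that plugging $A(\tau_0)$ into $\mathcal{T}_i(\boldsymbol{\tau})$ reproduces the stated single-variable form, and that is a routine simplification rather than a genuine obstacle. A direct alternative would be to compute $\mathcal{T}'_i{}''(\tau_0)$ and check its sign on $[0,1]$; this is feasible but strictly messier, and it yields no insight beyond what the composition argument gives for free, so I would not pursue it.
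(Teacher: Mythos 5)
Your overall strategy---pulling Lemma~\ref{Lemma-concave2} back to Lemma~\ref{Lemma-concave} through an affine substitution and invoking preservation of concavity under affine composition---is a genuinely different and, in principle, cleaner route than the paper's proof, which simply computes the second derivative $\frac{d^2\mathcal{T}^\prime_i(\tau_0)}{d\tau_0^2}=\frac{\gamma_i^2K}{(\tau_0-1)(\tau_0(\gamma_iK-i)+i)^2}\le 0$ and reads off the sign. However, the one step you dismiss as ``a routine simplification'' is precisely where your argument fails. Substituting $A(\tau_0)$ into $\mathcal{T}_i$ gives
\begin{equation*}
\mathcal{T}_i(A(\tau_0))=\frac{1-\tau_0}{K}\ln\left(1+(i-1)\gamma_i+\frac{K\gamma_i\tau_0}{1-\tau_0}\right),
\end{equation*}
whereas the lemma concerns $\frac{1-\tau_0}{K}\ln\left(i+\frac{K\gamma_i\tau_0}{1-\tau_0}\right)$. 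These coincide only when $\gamma_i=1$ or $i=1$; in general $1+(i-1)\gamma_i\neq i$, so $\mathcal{T}^\prime_i$ is \emph{not} the restriction $\mathcal{T}_i\circ A$, and Lemma~\ref{Lemma-concave} cannot be cited in the way you propose. (The mismatch actually exposes an algebra slip in the paper's own derivation of the objective of Problem~\ref{Problem-ThroughputMaximization3}, but the statement you were asked to prove is the one with ``$i$'' inside the logarithm, and your proof must address that function.) There is also a minor domain issue at $\tau_0=1$, where $A(\tau_0)$ has $\tau_i=0$ and $\mathcal{T}_i$ is only defined by continuous extension, but that is cosmetic compared with the algebraic gap.

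The gap is repairable while keeping your composition idea. Both the stated function and the true restriction have the form $u\ln\left(\frac{au+\gamma_i\tau_0}{u}\right)$ with $u=\frac{1-\tau_0}{K}$ and a constant $a>0$ ($a=i$ for the stated version, $a=1+(i-1)\gamma_i$ for the true restriction). Since $(x,y)\mapsto y\ln(x/y)$ is jointly concave for $x,y>0$ (it is the perspective of the logarithm) and $(au+\gamma_i\tau_0,\,u)$ is affine in $\tau_0$ with both components positive on $[0,1)$, concavity on $[0,1)$ follows, and extends to $\tau_0=1$ by continuity. Equivalently, apply Lemma~\ref{Lemma-concave} with a single user of unit gain to the affine pair $\bigl((i-1)u+\gamma_i\tau_0,\,u\bigr)$ rather than to $A(\tau_0)$. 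One of these repairs must be stated explicitly; as written, your proof rests on an identity that does not hold.
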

\begin{proof}
This can be proved by looking at the second order derivative of
$\mathcal{T}^\prime_i(\tau_0)$, which is
\begin{align}
\frac{d^2\mathcal{T}^\prime_i(\tau_0)}{dx^2}&=\frac{\gamma_i^2K}{(\tau_0-1)(\tau_0(\gamma_iK-i)+i)^2}\nonumber\\
&\stackrel{a}{\le}0,
\end{align}
where ``a'' follows from the fact that $0\le \tau_0 \le 1$. Thus, it
 is clear that $\mathcal{T}^\prime_i(\tau_0)$ is concave function of
 $\tau_0$ when $0\le \tau_0 \le 1$.
\end{proof}

Based on Lemma \ref{Lemma-concave2}, it is easy to observe that the
objective function of Problem \ref{Problem-ThroughputMaximization3}
is a concave function of $\tau_0$ when $0\le \tau_0 \le 1$, since
the summation operation preserves the concavity
\cite{Convexoptimization}. Since Problem
\ref{Problem-ThroughputMaximization3} is a convex optimization
problem with one optimization variable, it can be easily solved by
the subgradient method \cite{boyd2003subgradient}. Details are
omitted here for brevity.

%
%
%

\subsection{Total-time minimization}
In this subsection, we propose two suboptimal time allocation
schemes for the total-time minimization problem,  which are given as
follows.

(i). Equal time allocation. The idea of this scheme is to allocate
equal time to each user including the initial charging slot
$\tau_0$, i.e., $\tau_i=\tau_0$, $\forall i \in\{1,\cdots,K\}$.
Substituting this condition into Problem
\eqref{Problem-tauminimization}, the problem is simplified as
\begin{pro}\label{Problem-tauminimization-subop}
\begin{align}
\min_{\boldsymbol{\tau}} ~& \tau_0,\\
\mbox{s. t.}~&~\tau_0\ln\left(1+i\gamma_i\right)\ge D_i, \forall i
\in\{1,\cdots,K\}.
\end{align}
\end{pro}
The optimal solution for this problem can be easily obtained, which
is
\begin{align}
\tau_0=\mbox{argmax}_i \frac{D_i}{\ln\left(1+i\gamma_i\right)}.
\end{align}

(ii). Tangent-point allocation: This scheme is inspired by the
graphic method used for deriving the optimal solution of Problem
\ref{Problem-tauminimization}. The idea of this scheme is to let
$\tau_i$ takes the value of the tangent-point illustrated in Fig.
\ref{Fig-kxMinTimeDemoFig4} for all $i$, i.e.,
\begin{align}
\widetilde{\tau}_i&=\frac{D_i}{\mathcal
{W}\left(\frac{\gamma_i-1}{e}\right)+1},\forall i=1,\cdots,K,
\end{align}
and $\widetilde{\tau}_0$ is given by the smallest value such that
all the constraints given in \eqref{Problem-tauminimization-con} are
satisfied. Since the left hand side of each constraint given in
\eqref{Problem-tauminimization-con} is a monotonic increasing
function with respect to $\tau_0$. Thus, $\tau_0$ can be easily
found by the well-known bisection search \cite{Convexoptimization}.
Details are omitted here for brevity.

\section{Numerical Results} \label{Sec-NumericalResults}
In this section, several numerical examples are presented to
evaluate the performance of the proposed algorithms.

\subsection{Simulation setup}
In the simulation, the power of the noise $\sigma^2$ at the receiver
of the HAP is assumed to be one. 
For simplicity, the energy harvest efficiency for all users are
assumed to be the same and equal to one, i.e., $\eta_i=1, \forall
i$. The amount of data that each user has to send back is assumed to
be the same and equal to one, i.e., $D_i=1, \forall i$. We assume
i.i.d. Rayleigh fading for all channels involved, and thus the
channel power gains of these channels are exponentially distributed.
For convenience, we assume that the mean of the channel power gains
is one.  It is worth pointing out that the assumption of particular
distributions of the channel power gains does not affect the
structure of the problem studied and the algorithm proposed in this
paper. The results given in the following examples are obtained by
averaging over $1000$ channel realizations.
\begin{figure}[t]
        \centering
        \includegraphics*[width=12cm]{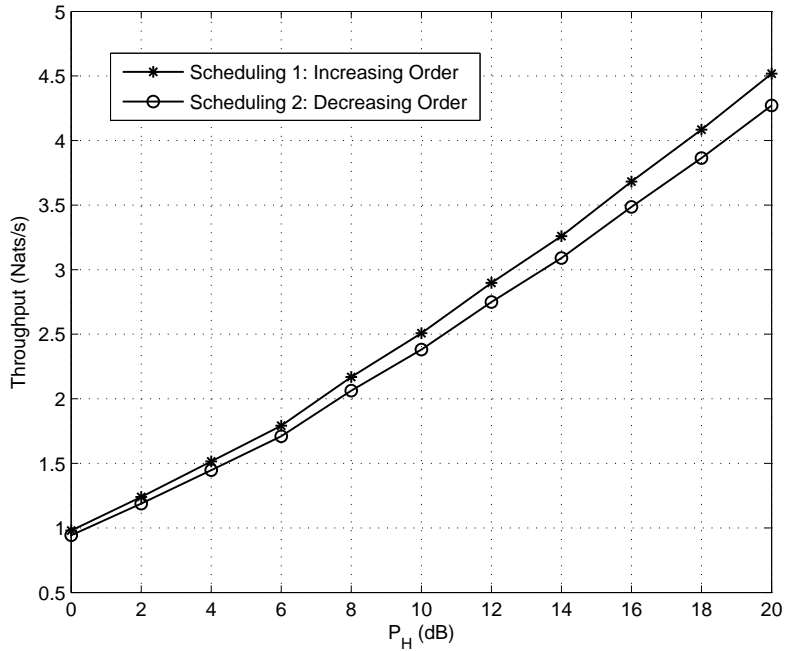}
        \caption{Throughput vs. HAP's transmit power ($P_H$)}
        \label{Fig-kxfig3}
\end{figure}

\subsection{Sum-throughput maximization}

\subsubsection{Effect of the user scheduling}
In Fig. \ref{Fig-kxfig3}, we investigate the effect of user
scheduling on the throughput of the proposed system. We consider two
scheduling schemes: (i) Increasing order of SNRs, i.e., the user
with lowest SNR is scheduled to transmit first; (ii) Decreasing
order of SNRs, i.e., the user with highest SNR is scheduled to
transmit first. For exposure, we assume that there are five users in
the network, i.e., $K=5$. It is observed from Fig.~\ref{Fig-kxfig3}
that the increasing order scheduling scheme performs better than the
decreasing order scheduling scheme.  It is also observed that the
throughput gap between the two scheduling schemes increases with the
increasing of $P_H$. This indicates that user scheduling is more
important when $P_H$ is large.

\subsubsection{Optimal Vs. Suboptimal Time Allocation}
\begin{figure}[t]
        \centering
        \includegraphics*[width=12cm]{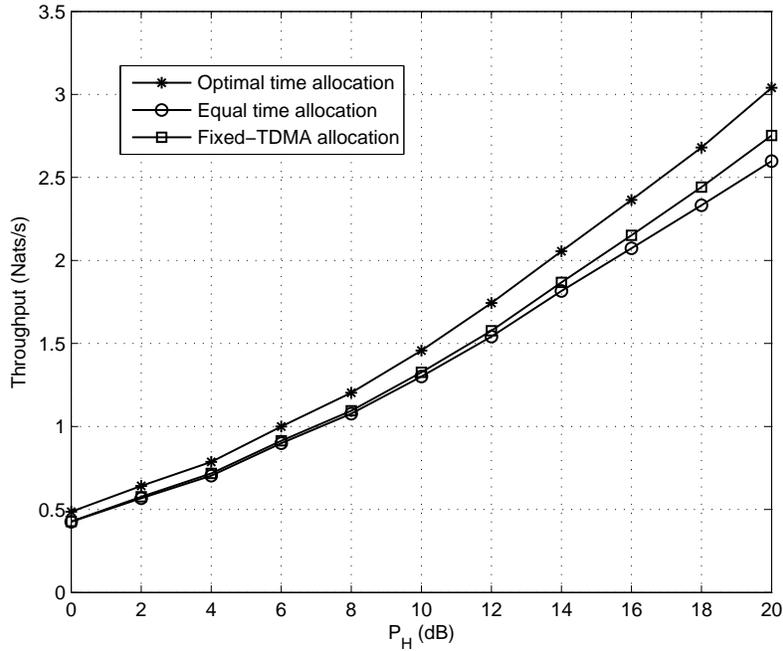}
        \caption{Throughput vs. HAP's transmit power ($P_H$)}
        \label{Fig-kxfig1}
\end{figure}
In Fig. \ref{Fig-kxfig1}, we investigate the effect of the transmit
power of the HAP on the throughput of the proposed system under the
optimal time allocation and that under the equal time allocation.
 In this example, for simplicity, we consider the case
that there are two users, i.e., $K=2$. It is observed that the
optimal time allocation always perform better than the suboptimal
time allocation. It is also observed from Fig. \ref{Fig-kxfig1} that
the throughput for all cases increases when the transmit power of
the HAP ($P_H$) increases as expected. A higher $P_H$ indicates that
the users can harvest more energy from the HAP, and thus can
transmit at higher transmission rates. Therefore, the throughput of
the system increases. Another important observation is that the
throughput gap between the optimal time allocation and the equal
time allocation increases with the increasing of $P_H$. This
indicates that time allocation plays a more important role when
$P_H$ is large.

\begin{figure}[t]
        \centering
        \includegraphics*[width=12cm]{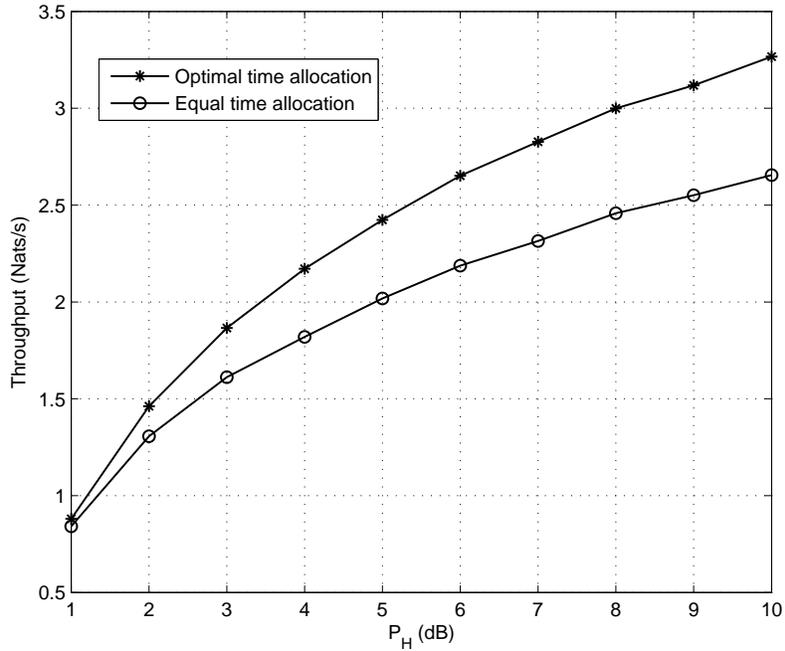}
        \caption{Throughput vs. Number of users ($K$)}
        \label{Fig-kxfig2}
\end{figure}
In Fig. \ref{Fig-kxfig2}, we investigate the effect of the number of
users on the throughput of the proposed system under the optimal
time allocation and that under the equal time allocation.  In this
example, the transmit power of the HAP is fixed at $P_H=10~dB$. It
is observed that the throughput gap between the optimal time
allocation and the equal time allocation increases with the
increasing of the number of users ($K$). It is observed from Fig.
\ref{Fig-kxfig2} that when $K=1$, the gap is negligible. However,
when $K=10$, the gap is as large as $1$. This indicates that time
allocation plays a more important role when $K$ is large. Another
important observation is that the throughput for both cases
increases when the number of users increases. This is in accordance
to the results presented in Theorem 1. 

\subsection{Total-time minimization}

\subsubsection{Effect of the user scheduling}
\begin{figure}[t]
        \centering
        \includegraphics*[width=10.5cm]{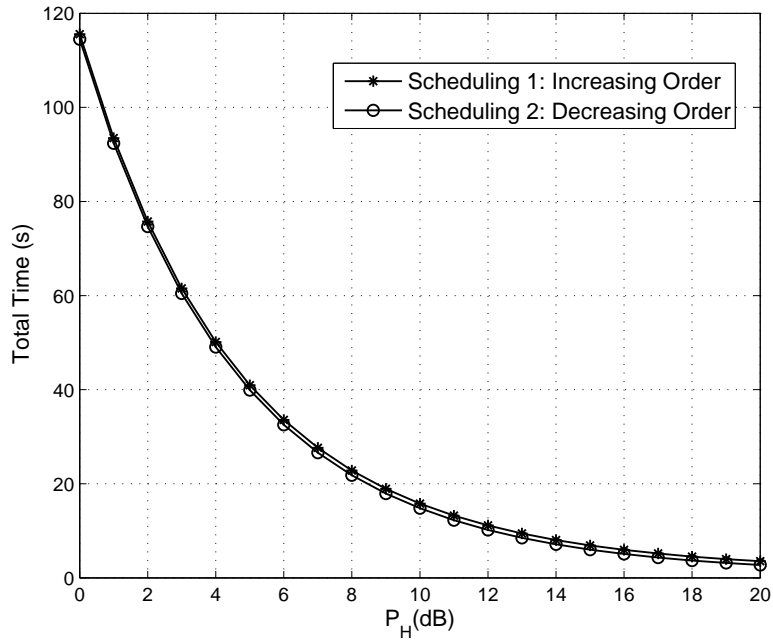}
        \caption{Total time vs. HAP's transmit power ($P_H$)}
        \label{kxconf2fig2SU5}
\end{figure}
In Fig.~\ref{kxconf2fig2SU5}, we investigate the effect of the user
scheduling on the total data collection time of the proposed WPCN.
We consider two scheduling schemes: (i) Increasing order of SNRs,
i.e., the user with lowest SNR is scheduled to transmit first; (ii)
Decreasing order of SNRs, i.e., the user with highest SNR is
scheduled to
transmit first. 
For exposure, we assume that there are five users in the network,
i.e., $K=5$. It is observed from Fig.~\ref{kxconf2fig2SU5} that the
increasing order scheduling scheme performs worse than the
decreasing order scheduling scheme, the gap between the two
scheduling schemes is, however, quite small. One of the possible
reasons is that users with good SNRs finish their data transmission
very fast, and thus contributes a negligible part of the total time.
On the other hand, users with the poor SNRs require long charging
and transmission time. Thus, the users with poor SNRs determine the
overall performance. Another observation is that the total time
decreases with the increasing of $P_H$. This is as expected, since
higher $P_H$ indicates that the users can harvest more energy from
the HAP, and thus can transmit at higher transmission rates.

%
%

\subsubsection{Optimal Vs. Suboptimal Time Allocation}
In this subsection, we compare the performance of suboptimal time
allocation with the optimal time allocation.

In Fig. 6, we compare the performance of the proposed suboptimal
time allocation with that of the optimal time allocation.
For exposure, we assume that there are five users in the network,
i.e., $K=5$. The equal time allocation is not included here due to
its poor performance. This also indicates that optimization is
necessary and helps in improving the system performance. As
expected, it is observed that the optimal time allocation always
perform better than tangent point allocation. It is also observed
that the gap between the optimal and the tangent point allocation is
very small. This indicates that the tangent point allocation scheme
has a very good performance. Another observation is that the total
time decreases with the increasing of $P_H$, and the gap also
decreases with the increasing of $P_H$.

\begin{figure}[t]
        \centering
        \includegraphics*[width=12cm]{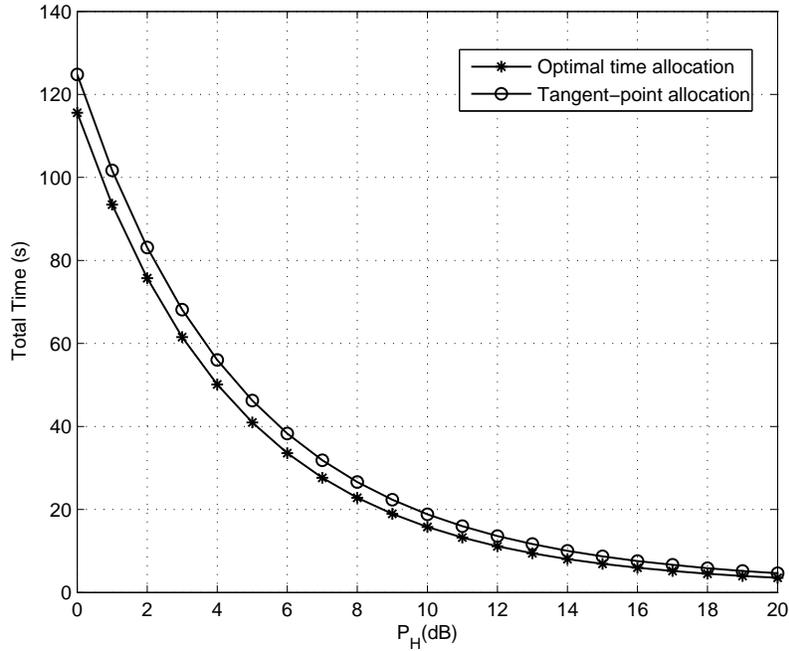}
        \caption{Optimal vs. Suboptimal time allocation}
        \label{kxconf2fig1SU5}
\end{figure}

%

\section{Conclusions}\label{Sec-Conclusions}
In this paper, we have proposed a new protocol to enable
simultaneous downlink wireless power transfer (WPT) and uplink
information transmission for a wireless communication network with a
full-duplex hybrid access point (HAP) and a set of wireless users
with energy harvesting capabilities.  Time-division-multiple-access
(TDMA) is employed to realize the multi-user uplink transmission.
All users can continuously harvest wireless power from the HAP until
its transmission slot, even during other users' uplink transmission.
Consequently, latter users' energy harvesting time is coupled with
the transmission time of previous users. Under this setup, we have
investigated the sum-throughput maximization (STM) problem and the
total-time minimization (TTM) problem for the proposed multi-user
full-duplex wireless-powered network, respectively. We have proved
that the STM problem is a convex optimization problem. The optimal
solution strategy has been obtained in closed-form expression. An
algorithm with linear complexity is then given for the convenience
of computation. We have shown that the sum-throughput is
non-decreasing with the increasing of the number of users. For the
TTM problem, we have proposed a two-step algorithm to obtain an
optimal solution by exploring the properties of the coupling
constraints. Then, we have proposed several suboptimal solutions are
each problem. We also have investigated the effect of user
scheduling on STM and TTM through simulations. We have shown that
different user scheduling strategies should be used for STM and TTM.


\end{document}